\documentclass[10pt,journal,romanappendices]{IEEEtran}
\usepackage{amsmath, amssymb, amsfonts}
\usepackage{amsthm}

\newtheoremstyle{psh}
  {0pt}              % Space above
  {0pt}              % Space below
  {\itshape}         % Italic body
  {\parindent}       % Indented heading
  {\bfseries}        % Bold heading
  {.}                % Punctuation after heading
  {0.5em}            % Space after heading
  {\thmname{#1}\thmnumber{ #2}\thmnote{ (#3)}}

\theoremstyle{psh}
\newtheorem{proposition}{Proposition}

\usepackage{graphicx}
\usepackage{bbm}
\usepackage{mathrsfs}  
\usepackage{booktabs}
\usepackage{array}
\usepackage{placeins}
\usepackage{color}
\usepackage{stfloats}
\usepackage{capt-of}

\allowdisplaybreaks

\begin{document}
\raggedbottom

\title{Pumped-Storage Hydropower Scheduling: An Event-based Convex-hull Approach}

\author{Bo Yang,~\IEEEmembership{Member,~IEEE,} Kai Pan,~\IEEEmembership{Member,~IEEE,} Mohammad Reza Hesamzadeh,~\IEEEmembership{Senior Member,~IEEE}
\thanks{B. Yang is with The Hong Kong University of Science and Technology, K. Pan is with The Hong Kong Polytechnic University, and M. R. Hesamzadeh is with KTH Royal Institute of Technology.}
\vspace{-.6cm}}

\maketitle

\begin{abstract}
Pumped-storage hydropower scheduling couples discrete operating-mode transitions with continuous dispatch decisions through intertemporal reservoir and ramping constraints.
Existing convexification methods do not simultaneously accommodate multiple operating modes, persistent continuous reservoir and ramping states, and interval costs that depend on both incoming and outgoing states of the operating modes.
We develop an event-based dynamic programming formulation in which mode transitions define variable-length events and mode-specific linear programs optimize continuous within-event operation.
We prove that this formulation is equivalent to the conventional time-indexed mixed-integer linear program.
Enumerating the finite set of discrete event paths and scaling their boundary-coupled polyhedra then yields an exact finite-dimensional convex-hull LP without state discretization. 
For tractability, we develop a finite-grid formulation that discretizes reservoir and ramping states only at event boundaries while retaining continuous within-event trajectories. 
Its feasible set contains that of the corresponding stagewise discretization.
Moreover, the resulting LP provides an exact convex-hull representation of the grid-restricted model and admits an error bound linear in the grid resolutions.
An event-based branch-and-bound method provides an independent computational verification of the continuous-state reformulation.
Standalone experiments verify the theoretical results, while tests on a modified IEEE 24-bus Reliability Test System show that the finite-grid LP substantially reduces computation time with a negligible increase in system cost.
\end{abstract}

\vspace{-0.1cm}

\begin{IEEEkeywords}
Pumped-storage hydropower, unit commitment, event-based modeling, convexification, linear programming, branch and bound.
\end{IEEEkeywords}

\vspace{-0.3cm}

\section{Introduction}
\subsection{Background and Motivation}
\IEEEPARstart{T}{he} increasing penetration of renewable generation has increased the value of resources that can shift energy across time and respond flexibly to changing system conditions.
Pumped-storage hydropower (PSH) is a mature technology for utility-scale, long-duration energy storage and supports energy arbitrage and various system services \cite{Ma2022,IHA2021,IEA2021,huang2021multistage}.
Its operational flexibility is further enhanced by technologies such as hydraulic short-circuit (HSC) operation, which enables simultaneous pumping and generation \cite{chazarra2016optimal,Chazarra2018,gerini2024optimal}.

The resulting scheduling problem combines discrete mode transitions with continuous operating decisions and strong intertemporal coupling.
In a deterministic baseline model, the unit switches among generating, pumping, HSC, and offline modes, while its dispatch variables, reservoir level, and ramping boundary state remain continuous.
The reservoir level links decisions across all modes through the cumulative water balance, while ramping links consecutive generating decisions through the output inherited from the preceding stage.

A conventional time-indexed mixed-integer linear program (MILP) can represent these restrictions, but couples discrete mode transitions with continuous dispatch, reservoir, and ramping trajectories at every stage. 
Although adequate for standalone scheduling, such models can be computationally burdensome when repeatedly solved to coordinate PSH units with other generators under system-level network constraints.
Existing dynamic programming (DP) convexifications do not jointly accommodate multiple PSH operating modes, continuous reservoir and ramping states carried across modes, and mode-level operating costs that depend on the incoming and outgoing values of these states \cite{zhai2010optimization,guan2018polynomial,Qu2023,Xiao2025,mu2026tackling}. 
Finite-state DPs can efficiently solve discretized hydro scheduling problems with pre-computed transition costs \cite{perez2010optimal}.
However, a scalar-cost implementation conceals the within-event variables, while conventional stagewise discretization can exclude feasible trajectories whose event boundaries lie on the grids but whose intermediate states do not.
These limitations motivate a unit-level convex hull formulation that captures all these features with boundary-conditioned events.

We develop an event-based DP formulation in which each event keeps the unit in one operating mode over a consecutive interval. An event specifies its ending stage, successor mode, and outgoing reservoir and ramping boundaries, while a mode-specific linear program (LP) determines its feasible dispatch trajectory and minimum operating cost within the event. We prove a correspondence between feasible time-indexed schedules and event trajectories, establishing that the resulting continuous-state Bellman equation is an exact reformulation of the original PSH MILP.

The event-based reformulation yields an exact convex-hull construction for the PSH MILP. 
Although event boundaries include continuous reservoir level and ramping boundary state, the set of
discrete event paths is finite.
For each path, we link adjacent events through shared boundary variables and apply perspective scaling to the resulting pathwise polyhedron. 
This construction yields a finite-dimensional, though potentially exponentially large, extended LP whose projection onto the original time-indexed variables equals the convex hull of the MILP feasible region. 
Its optimal value therefore coincides with those of the continuous-state DP and the original MILP.

Since the exact convex hull LP may contain exponentially many event paths, we develop two complementary implementations.
Restricting only the boundary states to finite grids yields a computationally
efficient arc-embedded LP that exactly convexifies the grid-restricted event model and admits a discretization-error bound linear in the grid resolutions.
Because the within-event trajectories remain continuous, this model is much less restrictive than the corresponding stagewise discretization model.
Retaining the continuous boundaries yields an exact branch and bound (B\&B) method that branches only on discrete event decisions and uses McCormick-based LP relaxations for bounding. This method is used to verify the continuous-state reformulation.

Numerical experiments validate the theoretical results and demonstrate the computational value of the finite-grid formulation. 
In standalone tests with a single PSH unit, the finite-grid LP exactly reproduces the grid-restricted DP optimum, approaches the continuous-state benchmark under grid refinement, and the event-based B\&B method matches the time-indexed MILP optimum. 
In addition, despite its finer ramping grid, stagewise discretization yields a \(30.75\%\) gap, compared with \(3.867\%\) for our event-based discretized LP at comparable runtimes.
In the 336-stage IEEE 24-bus system test, finite-grid PSH pricing reduces total and pricing times by factors of \(12.9\) and \(41.7\), respectively, while increasing system cost by only \(0.01\%\).
In the enlarged 672-stage instance, the finite-grid LP converges in 10 minutes, whereas the MILP does not converge within the two-hour limit.

\subsection{Related Literature and Contributions}

PSH scheduling has been studied through hydrothermal coordination, deterministic, stochastic, and chance-constrained scheduling, energy-and-reserve co-optimization, detailed mixed-integer formulations, and data-driven methods \cite{chazarra2016optimal,Chazarra2018,gerini2024optimal,guan1994optimization,lu2004pumped,Huang2022,liu2021secured,Toufani2023,wu2025day,favaro2024neural, toubeau2019chance}. 
These studies enrich the physical, market, and uncertainty representations of PSH operation. 
We address a complementary structural question concerning the exact reformulation and convexification of the multi-mode PSH scheduling problem with persistent reservoir and ramping states.

A related stream of literature develops DP, network-flow, and extended formulations for unit commitment and other multistage scheduling problems \cite{zhai2010optimization,guan2018polynomial,Xiao2025,mu2026tackling,perez2010optimal,MartinRardinCampbell1990,deLimaEtAl2022, bruninx2015coupling}. These formulations typically rely on finite discrete states or scalar arc costs indexed by commitment decisions and therefore do not capture PSH schedules in which reservoir dynamics couple successive events and a within-event operating problem determines each event’s feasibility and cost from its continuous incoming and outgoing reservoir and ramping states. A discretized DP could evaluate each event problem and optimize the resulting finite graph, but it typically requires precomputing arc costs, which conceals the within-event variables and does not yield a convex-hull formulation of the complete operating model. By retaining these variables, we construct primal PSH event polyhedra jointly conditioned on incoming and outgoing reservoir and ramping states. 
This polyhedral structure permits perspective scaling to yield both grid-restricted and continuous-state convexifications.

Unlike stagewise discretization, our finite-grid LP restricts reservoir and ramping states only at variable-length event boundaries. It therefore contains the stagewise-grid feasible set and yields an objective no worse.
We are unaware of any prior PSH convexification that enforces inter-event reservoir constraints through boundary-conditioned polyhedra and discretizes reservoir and ramping states only at endogenous event boundaries.
%To our knowledge, no prior short-term PSH unit-commitment formulation combines modeling variable-length events conditioned on continuous reservoir and ramping boundary states with boundary-only discretization that preserves continuous within-event trajectories.

Within the PSH literature, our previous work~\cite{Qu2023} provides the closest interval-based formulation. 
In that work, the DP arc costs, however, are not conditioned on arc-specific incoming and outgoing reservoir states. We instead make the reservoir and ramping boundaries explicit components of every event state and decision. Each event consequently represents a self-contained feasible transition; conditional on a discrete event path, the reservoir, ramping, and operating decisions remain disaggregated and form a boundary-coupled polyhedron. This structure enables both the continuous- and discrete-state convexifications. It also supports direct recovery of the dispatch, reservoir, and ramping trajectories.

Taken together, this paper makes three contributions. 
\begin{enumerate}
    \item First, we formulate an event-based DP that is exactly equivalent to the time-indexed PSH MILP. Event decisions specify the event duration, successor mode, and outgoing reservoir and ramping boundaries, while mode-specific LPs determine the feasible within-event operation and its minimum cost.
    \item Second, we construct a boundary-coupled polyhedron for every discrete event path while retaining continuous reservoir, ramping, and operating decisions. 
    Scaling these pathwise polyhedra yields a finite-dimensional, though potentially exponential-size, LP whose projection is exactly the convex hull of the original time-indexed feasible set.
    \item Third, we develop a tractable finite-grid implementation and an exact continuous-state verification method.
    Restricting only the reservoir and ramping states transmitted across event boundaries to finite grids yields an arc-embedded LP that exactly convexifies the grid-restricted event model, while within-event operation remains continuous. 
    With common grids, the event-boundary formulation attains an objective value no worse than the conventional stagewise discretization model.
    It also admits a controllable discretization-error bound.
    Retaining the continuous boundaries yields an exact event-based B\&B method that provides an independent verification benchmark for the original model.
    %Standalone experiments verify the theoretical results, while system-level experiments demonstrate the computational value of the finite-grid formulation.
\end{enumerate}

The remainder of the paper presents the time-indexed formulation in Section~\ref{sec:MILP}, the event-based formulation and solution methods in Sections~\ref{sec:event_model}--\ref{sec:ebb}, and numerical results in Section~\ref{sec:num}, followed by conclusions in Section~\ref{sec:conc}. All proofs are provided in the appendix.

\section{Time-Indexed Formulation}
\label{sec:MILP}
We study the operation of a PSH facility with generating and pumping equipment and a reservoir. The facility provides grid-scale energy storage. 
It can generate power by releasing water from the reservoir through the turbine, store energy by pumping water into the reservoir using power purchased from the grid, or perform both operations simultaneously. 
The PSH operator aims to minimize the total operating cost.

We consider a planning horizon of $T$ stages, indexed by $t\in\mathcal{T}:=\{1,\ldots,T\}$. 
At stage $t$, the reservoir is characterized by the reservoir level $M_t$, spillage $\Lambda_t$, and natural inflow $R_t$. 
The reservoir capacity is $\overline{M}$, and the spillage \(\Lambda_t\) and natural inflow \(R_t\) are exogenous inputs.
We consider four operating modes $\{\mathsf G,\mathsf P,\mathsf{SC},\mathsf O\}$,
where the four symbols denote generating, pumping, HSC operation, and offline, respectively. 
In generating mode, the unit produces $H_{t}^{\mathsf{O}}$ units of electricity and releases $\mu_t H_{t}^{\mathsf{O}}$ units of water, where $\mu_t$ is the generation water-conversion coefficient.
In pumping mode, the unit consumes $H_{t}^{\mathsf{I}}$ units of electricity and pumps $\alpha_t H_{t}^{\mathsf{I}}$ units of water into the reservoir, where $\alpha_t$ is the pumping efficiency coefficient.
Under HSC operation, the turbine and pump operate simultaneously, so both \(H_t^{\mathsf O}\) and \(H_t^{\mathsf I}\) may be positive at the same stage.
Generation output is constrained to
\([\underline{B}_t^{O},\,\bar{B}_t^{O}]\),
and pumping power consumption is constrained to
\([\underline{B}_t^{I},\,\bar{B}_t^{I}]\).
Ramping rates are limited by $\overline{V}$ throughout the generating process. 
Furthermore, the unit must remain online for at least $L$ consecutive stages and offline for at least $l$ consecutive stages before switching between online and offline.

In addition, we introduce \(SU_t\) and \(SD_t\) to represent start-up and shut-down costs at stage \(t\). 
For brevity, we treat them as known deterministic parameters.
We use binary variables \(y_t^{\mathsf G}\), \(y_t^{\mathsf P}\), and $y_t^{\mathsf{SC}}$ to indicate the unit's operating mode at stage \(t\). 
For instance, $y_{t}^{\mathsf{G}}=1$ indicates that the plant operates in generating mode and $y_{t}^{\mathsf{G}}=0$ otherwise.
Define $y_t := y_t^{\mathsf G} +y_t^{\mathsf P} +y_t^{\mathsf{SC}}$, $g_t := y_t^{\mathsf G} +y_t^{\mathsf{SC}}$, and $p_t := y_t^{\mathsf P} +y_t^{\mathsf{SC}}$. 
Thus, \(y_t\) indicates whether the unit is online, while \(g_t\) and \(p_t\) activate turbine output and pumping input, respectively. 
In HSC mode, \(g_t=p_t=1\), so both operations are permitted.
We also introduce binary variables \(u_t\) and \(d_t\), where \(u_t=1\) if the unit starts up at stage \(t\) and \(d_t=1\) if it shuts down.

Let \(h_t(\cdot)\) denote the hydro operating cost at stage \(t\), including convex piecewise-linear generation and pumping costs, net electricity payments at price \(\lambda_t\), and the opportunity value of water. Its epigraph representation is given in Section \ref{sec:event_lp_modules}. Let \(q\) collect the dispatch variables, reservoir levels, operating-mode indicators, and startup and shutdown variables over the planning horizon. The objective is to minimize the total cost:
% \begin{align}
% \label{eq:Obj}
% \min_{q}\sum_{t=1}^{T}[SU_tu_t + SD_td_t + h_t ( H_t^{\mathsf O}, H_t^{\mathsf I}, y_t^{\mathsf G}, y_t^{\mathsf P}, y_t^{\mathsf{SC}},\lambda_t)].
% \end{align}
\begin{IEEEeqnarray}{l}
\label{eq:objective}
\min_q \sum_{t=1}^{T}
\bigl[SU_tu_t + SD_td_t
+ h_t(H_t^O,H_t^I,y_t^G,y_t^P,y_t^{SC},\lambda_t)\bigr].
\IEEEeqnarraynumspace
\end{IEEEeqnarray}
The mode transition constraints and minimum up/down time constraints are
\begin{subequations}
\begin{align}
\sum_{\tau=t-L+1}^{t} u_{\tau}&\leq y_{t},\quad t\in\{L,\dots,T\},\label{eq:min-up}\\
\sum_{\tau=t-l+1}^{t} d_{\tau}&\leq 1-y_t,\quad t\in\{l, \dots,T\},\label{eq:min-down}\\
%\sum_{\tau=t-l+1}^{t} d_{\tau}&\leq 1 - y_{t-l},\quad t\in\{l, \dots,T\},\label{eq:min-down}\\
%y_{t} - y_{t-1} &\leq u_{t},\quad t\in\mathcal{T},\label{eq:swith}\\
%y_{t-1} - y_{t} &\leq d_{t},\quad t\in\mathcal{T},\label{eq:swith_1}\\
y_t-y_{t-1}&=u_t-d_t, \quad t\in\mathcal T, \label{eq:transition}\\
u_t+d_t&\le 1, \quad t\in\mathcal T, \label{eq:transition-exclusive}\\
y_t^{\mathsf G} + y_t^{\mathsf P} + y_t^{\mathsf{SC}} &= y_t, \quad t\in\mathcal T,\label{eq:status}\\
u_t, d_t, y_t^{\mathsf G}, y_t^{\mathsf P}, y_t^{\mathsf{SC}}, y_t &\in\{0,1\},\quad t\in\mathcal T.
\end{align}
\end{subequations}
Constraints~\eqref{eq:min-up}--\eqref{eq:min-down} enforce minimum up/down times, while \eqref{eq:transition}--\eqref{eq:transition-exclusive} record online/offline transitions.
Constraint~\eqref{eq:status} defines the online status and, because \(y_t\) is binary, permits at most one active mode.

The reservoir dynamics are captured by the following balance constraint for $t\in\mathcal{T}$:
\begin{align}
&M_{t+1} = M_{t} -\mu_tH_{t}^{\mathsf{O}}-\Lambda_{t} + R_{t} + \alpha_{t}H_{t}^{\mathsf{I}}.
\end{align}
This equation enforces conservation of water in the reservoir from stage $t$ to $t+1$. 
%The next period storage $M_{t+1}$ equals the current storage $M_{t}$, minus turbine discharge $\mu_t H_{t}^{\mathsf{O}}$ and spillage $\Lambda_{t}$, plus natural inflow $R_{t}$ and the effective pumped inflow $\alpha_{t}H_{t}^{\mathsf{I}}$. 
Note that the spillage $\Lambda_{t}$ and the natural inflow $R_{t}$ are known constants, as we consider a deterministic setting. The boundary conditions are given by
\begin{subequations}
\begin{align}
%& y_{1} = 0, \label{eq:init-off} \\
%& y_{t}^{\mathsf{G}} \underline{B}_{t}^{\mathsf{O}} \leq H_{t}^{\mathsf{O}} \leq y_{t}^{\mathsf{G}} \bar{B}_{t}^{\mathsf{O}}, \label{eq:gen-bounds} \\
%& y_{t}^{\mathsf{P}} \underline{B}_{t}^{\mathsf{I}} \leq H_{t}^{\mathsf{I}} \leq y_{t}^{\mathsf{P}} \bar{B}_{t}^{\mathsf{I}}, \label{eq:pump-bounds} \\
&g_t\underline B_t^{\mathsf O} \le H_t^{\mathsf O} \le g_t\bar B_t^{\mathsf O}, \quad\ t\in\mathcal{T}\label{eq:gen-bounds}\\
&p_t\underline B_t^{\mathsf I} \le H_t^{\mathsf I} \le p_t\bar B_t^{\mathsf I}, \quad t\in\mathcal{T} \label{eq:pump-bounds}\\
&-\overline V g_{t-1} \leq H_t^{\mathsf O}-H_{t-1}^{\mathsf O} \le \overline V g_t, \quad t\in\mathcal{T}\label{eq:ramp-up}\\
%&H_{t-1}^{\mathsf O}-H_t^{\mathsf O} \le \overline V g_{t-1}, \label{eq:ramp-down}\\
%&-\overline{V}y_{t-1}^{\mathsf{G}} \leq H_{t}^{\mathsf{O}} - H_{t-1}^{\mathsf{O}} \leq \overline{V}y_{t}^{\mathsf{G}}, \label{eq:ramp-up} \\
%& H_{t-1}^{\mathsf{O}} - H_{t}^{\mathsf{O}} \leq \overline{V}y_{t-1}^{\mathsf{G}}, \label{eq:ramp-down} \\
& M_{1} = M^{\text{Initial}}, \label{eq:init-level} \\
& 0 \leq M_{t} \leq \overline{M},\quad t\in\{1,\ldots,T+1\}. \label{eq:reservoir-capacity}
\end{align}
\end{subequations}
We assume the unit is offline at the initial stage by setting $y_0=g_0=H_0^{\mathsf O}=y_{1}=0$.
Constraints~\eqref{eq:gen-bounds} and~\eqref{eq:pump-bounds} activate turbine output in modes \(\mathsf G\) and \(\mathsf{SC}\) and pumping input in modes \(\mathsf P\) and \(\mathsf{SC}\), respectively.
Constraint~\eqref{eq:ramp-up} imposes turbine ramping limits across consecutive stages, including transitions into and out of generating and HSC modes.
\eqref{eq:init-level} sets the initial reservoir level. \eqref{eq:reservoir-capacity} enforces the storage capacity limits at every stage.

\section{Event-Based Modeling}
\label{sec:event_model}
This section reformulates the single-unit PSH model in Section \ref{sec:MILP} as an event-based DP. Each event is a consecutive interval of operation in a fixed mode. Its incoming boundary provides all state information needed to optimize the continuous within-event operation, while its outgoing boundary becomes the incoming state of the next event.

\subsection{Initial State of an Event}
\label{sec:event_state}
At each stage $t\in\mathcal T$, the unit is in one of four modes $x_t\in\mathcal X:=\{\mathsf G,\mathsf P, \mathsf{SC}, \mathsf O\}$.
The state at the beginning of an event includes the operating mode $x_t$, the reservoir level $M_t$, a ramping boundary variable, and a counter for the minimum up/down-time requirement. 
For \(t\geq2\), the ramping boundary \(\bar H_t\) records the previous generation output, i.e., $\bar H_t :=H_{t-1}^{\mathsf O}$ if $x_{t-1}\in\{\mathsf G,\mathsf{SC}\}$; and $0$, if $x_{t-1}\in\{\mathsf P,\mathsf O\}$.
% \begin{align}
% \bar H_t := \begin{cases}
% H_{t-1}^{\mathsf O}, &x_{t-1}\in\{\mathsf G,\mathsf{SC}\},\\
% 0, &x_{t-1}\in\{\mathsf P,\mathsf O\}.
% \end{cases}
% \label{eq:event-ramping-state}
% \end{align}
%At the initial stage, we set \(\bar H_1=0\). 
The counter \(\tau_t\in\{0,\ldots,\tau_{\max}\}\), where \(\tau_{\max}:=\max\{L-1,l-1\}\), records how many stages after \(t\) the unit must retain its current online/offline status before switching between an online mode and \(\mathsf O\).
Hence, the initial state of an event is $s_t:=(x_t,M_t,\bar H_t,\tau_t)\in\mathcal S_t$, where $\mathcal S_t:=\mathcal X\times [0,\overline{M}]\times [0,\overline{H}^{\mathsf O}]\times\{0,\ldots,\tau_{\max}\}$ and $\overline{H}^{\mathsf O}$ is a uniform upper bound on generation output, e.g., $\overline{H}^{\mathsf O}:=\max_{t\in\mathcal T}\bar{B}_t^{\mathsf O}$. 
At $t=1$, we have $s_1=(\mathsf O, M^{\text{Initial}},0,0)$.

\subsection{Events as Building Blocks}
\label{sec:event_blocks}
An event starting at stage $t$ and state $s_t$ is represented by the decision tuple $e_t=(j,x^\dagger, M_j,\bar{H}_j)$, where $j\in\{t+1,\ldots,T+1\}$ is the next decision stage; $x^\dagger\in\mathcal{X}$ is the mode entered at stage $j$; $M_j$ is the reservoir level at stage $j$; and $\bar{H}_j$ is the ramping boundary inherited by the next event.
The event indicates that the plant will remain in mode $x_t$ over stages $\{t,\ldots,j-1\}$ and then enter mode $x^\dagger$ at stage $j$.
The candidate event set is therefore
\begingroup
\allowdisplaybreaks
\begin{align*}
\mathcal E(s_t)
:=
\Big\{
&(j,x^\dagger,M_j,\bar H_j):
j\in\{t+1,\ldots,T+1\},\
x^\dagger\in\mathcal X,\\
&M_j\in[0,\overline M],\
\bar H_j\in[0,\overline H^{\mathsf O}],\\
&j\ge t+\tau_t+1
\text{ if }
x_t\in\{\mathsf G,\mathsf P,\mathsf{SC}\},\
x^\dagger=\mathsf O,\\
&j\ge t+\tau_t+1
\text{ if }
x_t=\mathsf O,\
x^\dagger\in\{\mathsf G,\mathsf P,\mathsf{SC}\}
\Big\}.
\end{align*}
\endgroup
Thus, transitions among \(\mathsf G\), \(\mathsf P\), and \(\mathsf{SC}\) may occur after one stage, whereas transitions to or from \(\mathsf O\) must satisfy the minimum up/down-time requirements. An event does not specify period-by-period releases or pumping quantities. Instead, the within-event LP in Section \ref{sec:event_lp_modules} determines the minimum-cost feasible dispatch over \(\{t,\ldots,j-1\}\) connecting the incoming and outgoing boundaries; if no such dispatch exists, \(c_{t,j}(s_t,e_t)=+\infty\).

The outgoing ramping boundary is linked to the terminal generation output of the within-event LP by
\begin{align}
\bar H_j =
\begin{cases}
H_{j-1}^{\mathsf O}, & \text{if } x_t\in\{\mathsf G,\mathsf{SC}\},\\
0, & \text{if } x_t\in\{\mathsf P,\mathsf O\},
\end{cases}
\end{align}
where $H_{j-1}^{\mathsf O}$ is the terminal turbine output of a generating or HSC event and is absent when $x_t\in\{\mathsf P, \mathsf O\}$.
If \(j\le T\), \(x_t\in\{\mathsf G, \mathsf{SC}\}\), and \(x^\dagger\in\{\mathsf P,\mathsf O\}\), event feasibility additionally requires $\bar H_j\le\overline V$.
This condition enforces the time-indexed ramp-down constraint when turbine output becomes zero at stage \(j\).
Finally, the counting state $\tau_t$ updates deterministically:
\begin{align}
\tau_j = \begin{cases} L-1, &x_t=\mathsf O,\ x^\dagger\in\mathcal X\setminus\{\mathsf O\},\\
l-1, &x_t\in\mathcal X\setminus\{\mathsf O\},\ x^\dagger=\mathsf O,\\
\max\{\tau_t-(j-t),0\}, &\text{otherwise}.
\end{cases}
\label{eq:tau}
\end{align}
Thus, executing event $e_t$ from $s_t$ produces the successor state $s_j=(x^\dagger,M_j,\bar H_j,\tau_j)$. Let $c_{t,j}(s_t,e_t)$ denote the optimal within-event operating cost.
The boundary transition cost is
\begin{align}
\Gamma_{t,j}(x_t,x^\dagger)=
\begin{cases}
SU_j, &j\le T,\ x_t=\mathsf O,\ x^\dagger\in\mathcal X\setminus\{\mathsf O\},\\
SD_j, &j\le T,\ x_t\in\mathcal X\setminus\{\mathsf O\},\ x^\dagger=\mathsf O,\\
0, &\text{otherwise}.
\end{cases}
\label{eq:gamma}
\end{align}
Transitions among \(\mathsf G\), \(\mathsf P\), and \(\mathsf{SC}\) do not incur fixed costs.
The event-based Bellman equation for $t\in\mathcal T$ is
\begin{equation}
\label{eq:DP}
V_t(s_t)=\min_{e_t\in\mathcal E(s_t)}
\Big\{c_{t,j}(s_t,e_t)+\Gamma_{t,j}(x_t,x^\dagger)+V_j(s_j)
\Big\},    
\end{equation}
with terminal condition $V_{T+1}(\cdot)\equiv 0$. 
As is standard in the literature \cite{zhai2010optimization,guan2018polynomial,Xiao2025, mu2026tackling, perez2010optimal, MartinRardinCampbell1990, deLimaEtAl2022}, the finite-horizon DP induces a network whose nodes are boundary states and whose arcs are events. 
Unlike conventional networks with discrete states or precomputed scalar arc costs, our state additionally includes continuous reservoir and ramping conditions required for within-event dispatch. 
Each arc embeds an LP that enforces reservoir balance, ramping, and operating constraints throughout the event, while shared boundary states link consecutive arcs so that the reservoir and ramping constraints also hold across event boundaries.
This structure provides the primal polyhedra used in the subsequent convexifications.

\subsection{Mode-Specific Within-Event Operating LPs}
\label{sec:event_lp_modules}
We now define \(c_{t,j}(s_t,e_t)\) as the optimal value of a mode-specific LP.
These costs are defined through epigraph variables \(\varphi_i\), whose lower bounds encode piecewise-linear physical costs, electricity payments or revenues, and water values. In the following LP blocks, \(\lambda_i\) denotes the electricity price at stage \(i\), and \(\nu\) denotes the marginal water value. The physical generation and pumping costs are represented by convex piecewise-linear functions. Specifically, \(a^{\mathsf G}_{i,m}\) and \(b^{\mathsf G}_{i,m}\), \(m=1,\ldots,N_{\mathsf G}\), define the generation-cost pieces, while \(a^{\mathsf P}_{i,m}\) and \(b^{\mathsf P}_{i,m}\), \(m=1,\ldots,N_{\mathsf P}\), define the pumping-cost pieces.\par
\ \\
\textbf{Generating block.} Consider the state $s_t = (\mathsf{G}, M_t,\bar{H}_t, \tau_t)$ of the generating block. The optimal operating cost of remaining in this mode from stage $t$ to $j-1$ is given by
\begin{equation}
c_{t,j}(s_t,e_t)=\min \sum_{i=t}^{j-1}\varphi_{i}^{\mathsf{G}}.
\label{eq:cG-primal-obj}
\end{equation}
The decision variables include generation outputs $H_i^{\mathsf{O}}$, epigraph variables $(\varphi_i^{\mathsf{G}},\phi_i^{\mathsf{G}})$ for $i\in\{t,\ldots,j-1\}$, and the reservoir levels $M_i$ for $i\in\{t,\ldots,j\}$. 
For any event $e_t=(j,x^{\dagger}, M_j, \bar H_j)$, let $M_{s_t}$ denote the reservoir component of $s_t$, and let $M_{e_t}$ and $\bar{H}_{e_t}$ denote the reservoir and ramping-boundary components of $e_t$, respectively.
%Denote the initial and terminal values as $:=M_t, M_{e_t}:=M_j$, and $\bar H_{e_t}:=\bar H_j$. 
Then, for $i\in\{t,\ldots,j-1\}$, we have
\begin{subequations}
\begin{align}
&\underline{B}_{i}^{\mathsf{O}}\leq H_{i}^{\mathsf{O}}\leq\bar{B}_{i}^{\mathsf{O}}, \label{eq:cG-primal-cap}\\
& -\overline V \leq H_{i}^{\mathsf{O}} - H_{i-1}^{\mathsf{O}} \leq \overline V,\label{eq:cG-primal-ramp-up}\\
%& H_{i-1}^{\mathsf{O}}-H_{i}^{\mathsf{O}}\leq\overline V,\label{eq:cG-primal-ramp-down}\\
& H_{t-1}^{\mathsf{O}}=\bar{H}_t, \label{eq:cG-primal-ramp-initial}\\
& M_{i+1} = M_{i} -\mu_{i}H_{i}^{\mathsf{O}}-\Lambda_{i} + R_{i}, \label{eq:cG-primal-reservoir}\\
& M_{t} = M_{s_t}, \label{eq:cG-primal-init-M}\\
& M_j = M_{e_t}, \label{eq:cG-primal-terminal-M}\\
& H_{j-1}^{\mathsf O}=\bar H_{e_t}, \label{eq:cG-primal-terminal-H}\\
& 0\leq M_{i}\leq \overline{M}, \label{eq:cG-primal-M-bounds}\\
& \varphi_{i}^{\mathsf{G}}\geq\phi^{\mathsf{G}}_{i}+ \nu\mu_i H_i^{\mathsf{O}} - \lambda_{i} H_{i}^{\mathsf{O}}, \label{eq:cG-primal-net-epi}\\
& \phi^{\mathsf{G}}_{i} \geq a^{\mathsf{G}}_{i,m} H_{i}^{\mathsf{O}} + b^{\mathsf{G}}_{i,m},\ m\in\{1, \dots, N_{\mathsf{G}}\}. \label{eq:cG-primal-PL}
\end{align}
\end{subequations}

Constraints \eqref{eq:cG-primal-cap}--\eqref{eq:cG-primal-ramp-up} impose generation capacity and ramping limits, with \eqref{eq:cG-primal-ramp-initial} linking the block to the previous generation level. \eqref{eq:cG-primal-reservoir} governs reservoir dynamics, and \eqref{eq:cG-primal-init-M}--\eqref{eq:cG-primal-M-bounds} enforce the incoming and outgoing reservoir boundaries, the outgoing ramping boundary, and the storage limits. \eqref{eq:cG-primal-net-epi}--\eqref{eq:cG-primal-PL} define the net operating cost using an epigraph formulation with generation cost functions that are convex and piecewise linear.\par
\ \\
\textbf{Pumping block.} For the pumping state $s_t=(\mathsf{P}, M_t, \bar{H}_t, \tau_t)$, the minimized total cost of remaining in pumping mode from stage $t$ to $j-1$ is given by
\begin{align}
c_{t,j}(s_t,e_t):=
\min & \sum_{i=t}^{j-1} \varphi_{i}^{\mathsf{P}}.
\end{align}
The decision variables include pumping power $H_i^{\mathsf{I}}$, epigraph variables $(\varphi_i^{\mathsf{P}},\phi_i^{\mathsf{P}})$ for $i\in\{t,\ldots,j-1\}$, and reservoir levels $M_i$ for $i\in\{t,\ldots,j\}$.
The spillage $\Lambda_i$ and natural inflow $R_i$ are known constants.
We have the following constraints for each $i\in\{t,\ldots,j-1\}$:
\begin{subequations}
\label{eq:cP-primal}
\begin{align}
&\underline{B}_{i}^{\mathsf{I}}\leq H_{i}^{\mathsf{I}} \leq\bar{B}_{i}^{\mathsf{I}}, \label{eq:cP-cap}\\
& M_{i+1}= M_{i}-\Lambda_{i} + R_{i} + \alpha_{i}H_{i}^{\mathsf{I}},\label{eq:cP-reservoir}\\
& M_{t} = M_{s_t}, \label{eq:cP-primal-init-M}\\
& M_j=M_{e_t}, \label{eq:cP-primal-terminal-M}\\
& \bar H_{e_t}=0, \label{eq:cP-primal-terminal-H}\\
& 0 \leq M_{i} \leq \overline{M},\label{eq:cP-M-bounds}\\
& \varphi_{i}^{\mathsf{P}}\geq\phi^{\mathsf{P}}_{i}
+ \lambda_{i} H_{i}^{\mathsf{I}}
- \nu \alpha_{i}H_{i}^{\mathsf{I}},\label{eq:cP-net-epi}\\
& \phi^{\mathsf{P}}_{i}\geq
a^{\mathsf{P}}_{i,m} H_{i}^{\mathsf{I}} + b^{\mathsf{P}}_{i,m},\ m\in\{1,\dots,N_{\mathsf{P}}\}.\label{eq:cP-PL}
\end{align}
\end{subequations}
Similar to the generating block, constraints~\eqref{eq:cP-cap}--\eqref{eq:cP-PL} define pumping capacity limits, reservoir dynamics constraint, the initial and terminal conditions and storage bounds, and the pumping-cost function that is convex and piecewise linear.

%Constraint~\eqref{eq:cP-cap} imposes pumping capacity limits, while \eqref{eq:cP-reservoir} governs reservoir dynamics in pumping mode. Constraints \eqref{eq:cP-primal-init-M}--\eqref{eq:cP-M-bounds} impose the initial and terminal conditions and storage bounds. Constraints \eqref{eq:cP-net-epi}--\eqref{eq:cP-PL} define the net operating cost via an epigraph formulation with a convex piecewise-linear pumping cost.\par
\ \\
\textbf{Hydraulic short-circuit block.}
For $s_t = (\mathsf{SC},M_t,\bar H_t,\tau_t)$, both turbine output and pumping input are permitted. 
The within-event operating cost is
\begin{align}
c_{t,j}(s_t,e_t) = \min\ \sum_{i=t}^{j-1}\varphi_i^{\mathsf{SC}}.
\label{eq:cSC-primal-obj}
\end{align}
The decision variables are $H_i^{\mathsf O}$, $H_i^{\mathsf I}$, $\varphi_{i}^{\mathsf{SC}}$, $\phi_{i}^{\mathsf G}$, and $\phi_i^{\mathsf P}$ for $i=t,\ldots, j-1$, together with $M_i$ for $i=t,\ldots, j$.
The HSC LP imposes the generation-capacity and turbine-ramping constraints
\eqref{eq:cG-primal-cap}--\eqref{eq:cG-primal-ramp-initial}, the boundary and reservoir-capacity constraints
\eqref{eq:cG-primal-init-M}--\eqref{eq:cG-primal-M-bounds}, the pumping-capacity constraint
\eqref{eq:cP-cap}, and the epigraph constraints
\eqref{eq:cG-primal-PL} and \eqref{eq:cP-PL} for the generation and pumping costs. The separate reservoir balance and net cost constraints of generating and pumping blocks are replaced, for \(i=t,\ldots,j-1\), by
\begin{subequations}
\label{eq:cSC-primal}
\begin{align}
M_{i+1}
&=
M_i-\mu_iH_i^{\mathsf O}-\Lambda_i+R_i
+\alpha_iH_i^{\mathsf I},
\label{eq:cSC-reservoir}\\
\varphi_i^{\mathsf{SC}}
&\ge
\phi_i^{\mathsf G}+\phi_i^{\mathsf P}
+(\nu\mu_i-\lambda_i)H_i^{\mathsf O}
+(\lambda_i-\nu\alpha_i)H_i^{\mathsf I}.
\label{eq:cSC-net-epi}
\end{align}
\end{subequations}
Hence, the HSC block combines the generating and pumping operating constraints.\\
\ \\
\textbf{Offline block.} For completeness, when $s_t=(\mathsf O,M_t,\bar H_t,\tau_t)$ and the plant remains offline over $\{t,\ldots,j-1\}$, we define $c_{t,j}(s_t,e_t)$ using the deterministic reservoir evolution with zero generation and pumping for $i=t,\ldots,j-1$,
\begin{align}
H_i^{\mathsf O}&=H_i^{\mathsf I}=0,\\
M_{i+1}&=M_i-\Lambda_i+R_i.
\end{align}
The event is feasible only if $M_i\in[0,\overline M]$ for $i=t+1,\ldots,j$ and $\bar H_j=0$. In that case, we set \(c_{t,j}(s_t,e_t)=0\).

The generating, pumping, HSC, and offline blocks differ only in their mode-specific physical and cost constraints. Each block is conditioned jointly on the incoming boundary \((M_t,\bar H_t)\) and outgoing boundary \((M_j,\bar H_j)\). 
Together with the common transition rules, these four blocks define the event feasibility and cost \(c_{t,j}(s_t,e_t)\) used in the Bellman equation. 
Consequently, the event-based formulation remains exactly equivalent to the time-indexed MILP.
\begin{proposition}
\label{prop:1}
Suppose the event set, state transition rules, and within-event LP blocks are defined as in Section \ref{sec:event_model}. Then the Bellman equation~\eqref{eq:DP} is equivalent to the time-indexed MILP in Section \ref{sec:MILP}. Every feasible solution of either formulation induces a feasible solution of the other with the same cost.
\end{proposition}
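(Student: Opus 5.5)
We prove the two directions of the correspondence separately and then conclude equality of optimal values. Both directions rest on one device. Any time-indexed schedule decomposes uniquely into maximal constant-mode runs. Conversely, any event trajectory $s_1=s_{t_0},e_{t_0},s_{t_1},e_{t_1},\ldots,s_{t_K}$ with $t_0=1<t_1<\cdots<t_K=T+1$ concatenates into a time-indexed schedule. We use runs rather than arbitrary splits: the DP also permits self-transitions $x^\dagger=x_t$, and these are handled by merging, as noted below.

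\emph{MILP $\Rightarrow$ DP.} Take a feasible $q$ and let $x_t$ be the mode implied by $(y_t^{\mathsf G},y_t^{\mathsf P},y_t^{\mathsf{SC}})$. Split $\mathcal T$ into maximal runs $[t_k,t_{k+1}-1]$. For each run define $e_{t_k}=(t_{k+1},x_{t_{k+1}},M_{t_{k+1}},\bar H_{t_{k+1}})$, with $\bar H$ as in Section~\ref{sec:event_state}, and set $x_{T+1}$ arbitrarily, say $\mathsf O$. We then verify three things in turn.
\begin{enumerate}
\item Event admissibility. Define $\tau_{t}$ along the path by \eqref{eq:tau}. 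By induction, $\tau_t$ equals the number of further stages the current status is forced to persist by \eqref{eq:min-up}--\eqref{eq:min-down}. Hence \eqref{eq:min-up}--\eqref{eq:transition} imply $j\ge t+\tau_t+1$ at every online/offline switch.
\item Within-event feasibility. Restricting $(H^{\mathsf O},H^{\mathsf I},M)$ to the run satisfies the relevant block. Equation \eqref{eq:gen-bounds} with $g_i=1$ gives \eqref{eq:cG-primal-cap}, and \eqref{eq:ramp-up} with $g_{i-1}=g_i=1$ gives \eqref{eq:cG-primal-ramp-up}. The first-stage constraint \eqref{eq:cG-primal-ramp-initial} holds by the definition of $\bar H_t$. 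At the entry of a $\mathsf G$/$\mathsf{SC}$ run from a $\mathsf P$/$\mathsf O$ stage, $\bar H_t=0$, and \eqref{eq:ramp-up} gives $H_t^{\mathsf O}\le\overline V$, matching \eqref{eq:cG-primal-ramp-up}. At the exit, \eqref{eq:ramp-up} with $g_j=0$ gives $H_{j-1}^{\mathsf O}\le \overline V$, which is exactly the extra condition $\bar H_j\le\overline V$.
\item Cost. Choosing $\phi_i,\varphi_i$ at their epigraph minima reproduces $h_i$. The terms $u_t,d_t$ are nonzero exactly at switches, so they yield $\Gamma$. Summing over events gives the MILP cost, so this trajectory's DP cost is at most the MILP cost.
\end{enumerate}

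\emph{DP $\Rightarrow$ MILP.} Given an event path with finite cost, take optimal (or any feasible) within-event LP solutions and concatenate them. Because consecutive blocks share the boundary values $M_{t_k}$ and $\bar H_{t_k}$, the reservoir balance and bounds hold on all of $\mathcal T$. We set the binaries from the modes and define $u_t,d_t$ from $y_t-y_{t-1}$, so that \eqref{eq:transition}--\eqref{eq:status} hold. The bounds \eqref{eq:gen-bounds}--\eqref{eq:pump-bounds} are trivial off-mode, since $H=0$ there.

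Ramping must be checked at every stage $t$ according to the pair $(g_{t-1},g_t)$:
\begin{itemize}
\item Inside a block it is \eqref{eq:cG-primal-ramp-up}.
\item Across a $\mathsf G\to\mathsf{SC}$ boundary, the shared $\bar H$ together with \eqref{eq:cG-primal-ramp-initial} gives the constraint.
\item Entering from off, $\bar H=0$ and $H_t^{\mathsf O}\le\overline V$.
\item Leaving to off, the condition $\bar H_j\le\overline V$ applies.
\end{itemize}
We also need $\bar H_{t_k}=0$ whenever the predecessor mode is $\mathsf P$ or $\mathsf O$, which is enforced by those blocks' terminal constraints.

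Minimum up/down times follow from the $\tau$ recursion \eqref{eq:tau} and the admissibility conditions in $\mathcal E(s_t)$. A switch at $j$ resets $\tau_j$ to $L-1$ or $l-1$. Any later switch at $j'$ then requires $j'\ge j+L$ (resp. $j+l$), giving \eqref{eq:min-up}--\eqref{eq:min-down}. Note that \eqref{eq:min-up}--\eqref{eq:min-down} are only imposed for $t\ge L$ (resp. $l$), and the initial state has $\tau_1=0$; these are consistent with each other. For a self-transition $x^\dagger=x_t$ the recursion decrements $\tau$ by $j-t$, so concatenating such events reproduces a single run. Costs add, with the LP values bounding $h_t$ from above via the epigraph, and $\Gamma$ matching $SU,SD$.

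Combining both directions, the feasible sets map onto each other with equal costs (taking epigraph variables tight). Hence $V_1(s_1)$ equals the MILP optimum, with the infimum attained since each LP is feasible or infeasible over a finite set of discrete paths.

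The main obstacle is the bookkeeping at event boundaries, namely the ramping constraint \eqref{eq:ramp-up} across mode changes and the $\tau$-counter equivalence with \eqref{eq:min-up}--\eqref{eq:min-down}. I would handle both by a stage-by-stage case analysis on $(x_{t-1},x_t)$, with a separate induction establishing that $\tau_t$ equals the remaining forced residence time.
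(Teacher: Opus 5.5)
Your DP$\Rightarrow$MILP half is essentially the paper's concatenation argument, but the MILP$\Rightarrow$DP half has two concrete defects.

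First, $x_{T+1}$ cannot be chosen arbitrarily. The admissibility condition $j\ge t+\tau_t+1$ in $\mathcal E(s_t)$ is not waived at $j=T+1$, whereas \eqref{eq:min-up} is imposed only for stages up to $T$. Take $L\ge 3$ and any feasible schedule that is offline through stage $T-2$ and online at stages $T-1$ and $T$. Your last run then starts with $\tau_{T-1}=L-1$. The event $(T+1,\mathsf O,\cdot,\cdot)$ requires $T+1\ge (T-1)+(L-1)+1$, i.e.\ $L\le 2$, so it is inadmissible. Your induction claim that $\tau_t$ equals the remaining forced residence time fails exactly here, because $\tau$ is not truncated at the horizon while the MILP's forced residence is. The paper sets $x_{T+1}:=x_T$. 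The final event is then a self-transition, which carries no counter condition.

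Second, maximal runs only give a DP cost at most $C(q)$, as your item 3 itself says. Once a run spans several stages, the boundary values no longer determine the interior dispatch. For instance, a $\mathsf P$ run can reallocate $H^{\mathsf I}$ across its stages while keeping $\sum_i\alpha_iH_i^{\mathsf I}$, and hence both reservoir boundaries, fixed. So $c_{t,j}(s_t,e_t)$ may be strictly below the cost of $q$ restricted to the run.

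This suffices for $V_1(s_1)=\min_{q\in\mathcal Y}C(q)$. It does not give the proposition's claim that every feasible MILP solution induces an event solution with the \emph{same} cost, so your closing assertion of equal costs in both directions is unsupported.

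The paper gets equality by using one-stage events $e_t=(t+1,x_{t+1},M_{t+1},\bar H_{t+1})$. There $\bar H_{t+1}$ fixes $H_t^{\mathsf O}$ when the turbine is active, and the reservoir balance then fixes $H_t^{\mathsf I}$. Hence $c_{t,t+1}(s_t,e_t)=h_t$ exactly, and $\Gamma_{t,t+1}(x_t,x_{t+1})=SU_{t+1}u_{t+1}+SD_{t+1}d_{t+1}$ for $t\le T-1$. The self-transitions you chose to merge away are precisely what make this exact cost accounting possible.
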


\subsection{A Finite-Dimensional Convexification of the
Continuous-State DP}
\label{sec:finite_dimensional_convexification}

The Bellman equation~\eqref{eq:DP} combines a finite set of discrete
event paths with continuous reservoir, ramping, and within-event
decisions. We enumerate only the discrete paths and retain all
continuous variables in a boundary-coupled polyhedron associated with
each path. Perspective scaling of these pathwise polyhedra then yields
an exact finite-dimensional convexification without state
discretization.

Recall that \(q\) collects the variables of the time-indexed formulation, let \(\mathcal Y\) denote its feasible set, and let \(C(q)\) denote its total cost. 
%Redundant unbounded cost-epigraph slacks are excluded from \(q\) and retained only as auxiliary variables representing \(C(q)\).
Recall that the discrete event state and action are $\tilde s_t=(t,x_t,\tau_t)$ and 
$\tilde a_t=(j,x^\dagger)$.
Let \(\widetilde\Omega\) contain all complete discrete event paths that satisfy~\eqref{eq:tau}, begin from the prescribed initial state, and terminate at stage \(T+1\). 
Since the horizon, mode set, and counter set are finite, \(\widetilde\Omega\) is finite, although potentially exponential in \(T\).

Fix \(\tilde\omega\in\widetilde\Omega\), let $1=t_1<t_2<\cdots<t_{K(\tilde\omega)+1}=T+1$ denote the corresponding event boundaries. Then event \(k\) operates in mode \(x_{t_k}\) over \(\{t_k,\ldots,t_{k+1}-1\}\).
For each \(i\in\mathcal T\), the path fixes \(y_i^{\mathsf G}(\tilde\omega)\), \(y_i^{\mathsf P}(\tilde\omega)\), and \(y_i^{\mathsf{SC}}(\tilde\omega)\), and hence also fixes \(y_i(\tilde\omega)\), \(g_i(\tilde\omega)\), \(p_i(\tilde\omega)\), \(u_i(\tilde\omega)\), and
\(d_i(\tilde\omega)\).
Set \(g_0(\tilde\omega)=0\), and introduce a path weight \(\rho_{\tilde\omega}\geq0\), with
\begin{equation}
\label{eq:FD-path-weight}
\sum_{\tilde\omega\in\widetilde\Omega} \rho_{\tilde\omega}=1.
\end{equation}

For every path, introduce the scaled variables $\widetilde M_i^{\tilde\omega}$, $\widetilde H_i^{\mathsf O,\tilde\omega}$, $\widetilde H_i^{\mathsf I,\tilde\omega}$, $\widetilde\phi_i^{\mathsf G,\tilde\omega}$, $\widetilde\phi_i^{\mathsf P,\tilde\omega}$ and $\widetilde h_i^{\tilde\omega}$, i.e., we define
\begin{equation}
\label{eq:FD-scaled-variable-definitions}
\begin{aligned}
\widetilde M_i^{\tilde\omega}
&=
\rho_{\tilde\omega}M_i^{\tilde\omega},
&
\widetilde H_i^{\mathsf O,\tilde\omega}
&=
\rho_{\tilde\omega}H_i^{\mathsf O,\tilde\omega},
&
\widetilde H_i^{\mathsf I,\tilde\omega}
&=
\rho_{\tilde\omega}H_i^{\mathsf I,\tilde\omega},
\\
\widetilde\phi_i^{\mathsf G,\tilde\omega}
&=
\rho_{\tilde\omega}\phi_i^{\mathsf G,\tilde\omega},
&
\widetilde\phi_i^{\mathsf P,\tilde\omega}
&=
\rho_{\tilde\omega}\phi_i^{\mathsf P,\tilde\omega},
&
\widetilde h_i^{\tilde\omega}
&=
\rho_{\tilde\omega}h_i^{\tilde\omega}.
\end{aligned}
\end{equation}
These quantities are defined directly as LP variables, without introducing bilinear products between path weights and unscaled variables. They satisfy
\begin{subequations}
\label{eq:FD-path-physical}
\begin{align}
\widetilde M_1^{\tilde\omega}
&=
M^{\mathrm{Initial}}\rho_{\tilde\omega},\\
0&\leq\widetilde M_i^{\tilde\omega}
\leq
\overline M\rho_{\tilde\omega},\\
g_i(\tilde\omega)\underline B_i^{\mathsf O}
\rho_{\tilde\omega}&\leq
\widetilde H_i^{\mathsf O,\tilde\omega}
\leq
g_i(\tilde\omega)\bar B_i^{\mathsf O}
\rho_{\tilde\omega},\\
p_i(\tilde\omega)\underline B_i^{\mathsf I}
\rho_{\tilde\omega}&\leq 
\widetilde H_i^{\mathsf I,\tilde\omega}
\leq
p_i(\tilde\omega)\bar B_i^{\mathsf I}
\rho_{\tilde\omega},\\
-\overline V g_{i-1}(\tilde\omega)\rho_{\tilde\omega}&\leq \widetilde H_i^{\mathsf O,\tilde\omega}
-\widetilde H_{i-1}^{\mathsf O,\tilde\omega}
\leq
\overline V g_i(\tilde\omega)\rho_{\tilde\omega},
\label{eq:FD-path-ramping}\\
\widetilde M_{i+1}^{\tilde\omega}
&=
\widetilde M_i^{\tilde\omega}
+(R_i-\Lambda_i)\rho_{\tilde\omega}
\notag\\
&\quad
-\mu_i\widetilde H_i^{\mathsf O,\tilde\omega}
+\alpha_i\widetilde H_i^{\mathsf I,\tilde\omega}.
\end{align}
\end{subequations}
The reservoir bounds hold for \(i=1,\ldots,T+1\), and the remaining constraints hold for \(i\in\mathcal T\). We set \(\widetilde H_0^{\mathsf O,\tilde\omega}=0\), consistently with the
initial offline condition.
Because a single variable \(\widetilde M_i^{\tilde\omega}\) is used at each physical stage, the terminal reservoir of one event is automatically the initial reservoir of the next. 
Likewise, \eqref{eq:FD-path-ramping} links consecutive turbine outputs directly; the event-boundary value \(\bar H_t\) is therefore implicit and need not be duplicated.

The epigraph of the scaled operating-cost function is
\begin{subequations}
\label{eq:FD-path-cost-epigraph}
\begin{align}
\widetilde\phi_i^{\mathsf G,\tilde\omega}
&\geq
a_{i,m}^{\mathsf G}
\widetilde H_i^{\mathsf O,\tilde\omega}+b_{i,m}^{\mathsf G}
g_i(\tilde\omega)\rho_{\tilde\omega},
\label{eq:FD-path-G-cost}\\
\widetilde\phi_i^{\mathsf P,\tilde\omega}
&\geq
a_{i,m}^{\mathsf P}
\widetilde H_i^{\mathsf I,\tilde\omega}+b_{i,m}^{\mathsf P}
p_i(\tilde\omega)\rho_{\tilde\omega},
\label{eq:FD-path-P-cost}\\
\widetilde h_i^{\tilde\omega}
&\geq
\widetilde\phi_i^{\mathsf G,\tilde\omega}
+\widetilde\phi_i^{\mathsf P,\tilde\omega}
+(\nu\mu_i-\lambda_i)
\widetilde H_i^{\mathsf O,\tilde\omega}
\notag\\
&\quad+(\lambda_i-\nu\alpha_i)
\widetilde H_i^{\mathsf I,\tilde\omega}.
\label{eq:FD-path-net-cost}
\end{align}
\end{subequations}
These inequalities are imposed for every \(i=1,\ldots,T\), with the first and second also imposed for \(m=1,\ldots,N_{\mathsf G}\) and \(m=1,\ldots,N_{\mathsf P}\), respectively.
If the path is in mode \(\mathsf G\), only turbine output is active; if it is in mode \(\mathsf P\), only pumping input is active; in mode \(\mathsf{SC}\), \(g_i(\tilde\omega)=p_i(\tilde\omega)=1\), so both physical costs and both energy terms are present; and in mode \(\mathsf O\), all power variables and the minimum feasible operating cost are zero. 
Thus \eqref{eq:FD-path-physical}--\eqref{eq:FD-path-cost-epigraph} are exactly the perspective-scaled forms of the four within-event blocks.

The scaled cost associated with path \(\tilde\omega\) is
\begin{align}
\widetilde C^{\tilde\omega}
:={}&
\sum_{i=1}^{T}\widetilde h_i^{\tilde\omega}
+
\rho_{\tilde\omega}
\sum_{k=1}^{K(\tilde\omega)}
\Gamma_{t_k,t_{k+1}}
\left(x_{t_k},x_{t_{k+1}}\right).
\label{eq:FD-path-total-cost}
\end{align}
The first term represents within-event operating costs, while the second collects the transition costs along the discrete path.

The time-indexed continuous variables are reconstructed by
\begin{subequations}
\label{eq:FD-reconstruction}
\begin{align}
\bar M_i
&=
\sum_{\tilde\omega\in\widetilde\Omega}
\widetilde M_i^{\tilde\omega},\quad i=1,\ldots,T+1,
\label{eq:FD-reconstruct-M}\\
\bar H_i^{\mathsf O}
&=
\sum_{\tilde\omega\in\widetilde\Omega}
\widetilde H_i^{\mathsf O,\tilde\omega},
\quad 
\bar H_i^{\mathsf I}
=
\sum_{\tilde\omega\in\widetilde\Omega}
\widetilde H_i^{\mathsf I,\tilde\omega},
\quad i=1,\ldots,T.
\label{eq:FD-reconstruct-H}
\end{align}
\end{subequations}
For each
\(v\in\{y^{\mathsf G},y^{\mathsf P},y^{\mathsf{SC}},y,g,p,u,d\}\),
the corresponding discrete or transition variable is reconstructed by
\begin{align}
\bar v_i
&=
\sum_{\tilde\omega\in\widetilde\Omega}
\rho_{\tilde\omega}v_i(\tilde\omega),
\quad i=1,\ldots,T.
\label{eq:FD-reconstruct-discrete}
\end{align}
Let \(\bar q\) collect these reconstructed variables and define
\begin{align}
\bar C
=
\sum_{\tilde\omega\in\widetilde\Omega}
\widetilde C^{\tilde\omega}.
\label{eq:FD-total-cost}
\end{align}
Let \(\mathcal P^{\mathsf{FD}}\) denote the feasible region defined by \eqref{eq:FD-path-weight}--\eqref{eq:FD-total-cost}. 
The resulting formulation is a finite-dimensional LP because both the horizon and \(\widetilde\Omega\) are finite.

For every path with \(\rho_{\tilde\omega}>0\), dividing its scaled variables by \(\rho_{\tilde\omega}\) recovers a feasible schedule in \(\mathcal Y\) whose discrete decisions follow \(\tilde\omega\). 
If \(\rho_{\tilde\omega}=0\), the scaled physical variables vanish, while the auxiliary epigraph variables are excluded from \(\bar q\) and may be set to zero without loss. 
Hence, every reconstructed point is a convex combination of schedules in \(\mathcal Y\).

Conversely, \textbf{Proposition~\ref{prop:1}} associates every schedule in \(\mathcal Y\) with a complete discrete event path. 
Grouping any convex combination of feasible schedules by their induced paths and summing
their weighted continuous variables produces a feasible point of \(\mathcal P^{\mathsf{FD}}\). These two constructions establish the convex-hull equality below.

\begin{proposition}[Finite-dimensional continuous-state convexification]
\label{prop:finite_dimensional_convexification}
Suppose the event-state convention, transition rules, and within-event feasible sets are consistent with the time-indexed formulation. Then
\[
\operatorname{proj}_{\bar q}
\left(\mathcal P^{\mathsf{FD}}\right)
=
\operatorname{conv}(\mathcal Y),
\]
and
\[
\min_{(\bar q,\cdot)\in\mathcal P^{\mathsf{FD}}}
\bar C
=
V_1(s_1)
=
\min_{q\in\mathcal Y}C(q).
\]
Moreover, an optimal solution can be chosen with unit weight assigned to a single optimal discrete event path.
\end{proposition}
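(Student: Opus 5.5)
I will prove the projection equality by two inclusions. I will then derive the value identities from the linearity of \(\bar C\) and from Proposition~\ref{prop:1}.

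\emph{Inclusion \(\operatorname{proj}_{\bar q}(\mathcal P^{\mathsf{FD}})\subseteq\operatorname{conv}(\mathcal Y)\).} Fix a feasible point of \(\mathcal P^{\mathsf{FD}}\) and let \(\Omega^+:=\{\tilde\omega:\rho_{\tilde\omega}>0\}\).

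For \(\tilde\omega\in\Omega^+\), divide every scaled variable by \(\rho_{\tilde\omega}\). Constraints \eqref{eq:FD-path-physical} then become exactly the time-indexed bounds \eqref{eq:gen-bounds}--\eqref{eq:reservoir-capacity} and the reservoir balance, with the binaries frozen at \(v_i(\tilde\omega)\). Since \(\tilde\omega\) is a complete discrete path obeying \eqref{eq:tau} and the event-set restrictions, these frozen binaries satisfy \eqref{eq:min-up}--\eqref{eq:status}. This step uses the standing hypothesis that the event conventions are consistent with the time-indexed model; I would check the \(\tau\)-counter against the minimum up/down constraints here. The unscaled point \(q^{\tilde\omega}\) therefore lies in \(\mathcal Y\).

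For \(\tilde\omega\notin\Omega^+\), all scaled physical variables are forced to zero. The reservoir bounds force \(\widetilde M_i^{\tilde\omega}=0\), the capacity bounds force \(\widetilde H_i^{\mathsf O,\tilde\omega}=\widetilde H_i^{\mathsf I,\tilde\omega}=0\), and \(\rho_{\tilde\omega}v_i(\tilde\omega)=0\). Hence \eqref{eq:FD-reconstruction}--\eqref{eq:FD-reconstruct-discrete} give \(\bar q=\sum_{\tilde\omega\in\Omega^+}\rho_{\tilde\omega}q^{\tilde\omega}\), which is a convex combination of points of \(\mathcal Y\).

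\emph{Inclusion \(\operatorname{conv}(\mathcal Y)\subseteq\operatorname{proj}_{\bar q}(\mathcal P^{\mathsf{FD}})\).} Take \(\bar q=\sum_r\beta_r q^r\) with \(q^r\in\mathcal Y\). By Proposition~\ref{prop:1}, each \(q^r\) induces a complete discrete event path \(\tilde\omega(r)\in\widetilde\Omega\) whose frozen binaries coincide with those of \(q^r\). Set \(\rho_{\tilde\omega}=\sum_{r:\tilde\omega(r)=\tilde\omega}\beta_r\), and define each scaled variable as the corresponding sum, for example \(\widetilde M_i^{\tilde\omega}=\sum_{r:\tilde\omega(r)=\tilde\omega}\beta_rM_i^r\). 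The cost epigraph variables are set from the \(\phi\)'s and \(h\)'s of each \(q^r\) in the same way.

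Every constraint in \eqref{eq:FD-path-physical}--\eqref{eq:FD-path-cost-epigraph} is linear and homogeneous in \((\rho_{\tilde\omega},\text{scaled variables})\) for a fixed path. Each \(q^r\) satisfies the \(\rho=1\) instance, so the weighted sums satisfy the scaled instance. The reconstructions then return \(\bar q\). This is the main obstacle. Grouping schedules by path requires that paths with identical \(v(\tilde\omega)\) be treated consistently. It also requires that the epigraph form \(h_i\) of the MILP cost match \eqref{eq:FD-path-cost-epigraph}; this is where the statement's consistency assumption is used. In particular, the constant terms \(b\) must be gated by \(g_i\) and \(p_i\).

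\emph{Values.} On \(\mathcal P^{\mathsf{FD}}\), minimizing over the epigraph variables makes \(\bar C=\sum_{\tilde\omega}\rho_{\tilde\omega}C(q^{\tilde\omega})\). The transition term \(\rho_{\tilde\omega}\sum_k\Gamma\) equals \(\rho_{\tilde\omega}\sum_i(SU_iu_i+SD_id_i)\) along the path, by \eqref{eq:gamma} and \eqref{eq:transition}. Hence \(\bar C\ge\min_{q\in\mathcal Y}C(q)\).

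For the reverse inequality, put \(\rho=1\) on the path induced by a MILP optimum. This point is feasible and attains \(\min_{q\in\mathcal Y}C(q)\), which gives both the equality of values and the single-path optimal solution claimed in the statement. Finally, \(\min_{q\in\mathcal Y}C(q)=V_1(s_1)\) follows directly from Proposition~\ref{prop:1}.

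Existence of minimizers follows because \(\widetilde\Omega\) is finite and each path polytope is bounded. The cost is bounded below on each path because the epigraph variables are minimized over a compact set of \(H\).
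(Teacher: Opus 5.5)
Your proposal is correct and follows essentially the same route as the paper's proof. You normalize each positive-weight path block to obtain a schedule in $\mathcal Y$, with zero-weight blocks forced to vanish. For the reverse inclusion you group a convex combination by the paths induced via Proposition~\ref{prop:1} and use the homogeneity of each path's linear system. You then bound $\bar C$ below by $\sum_{\tilde\omega}\rho_{\tilde\omega}C(q^{\tilde\omega})$ and attain this bound by placing unit weight on the path of an optimal schedule.

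The ``obstacles'' you flag are harmless. Any fixed assignment $r\mapsto\tilde\omega(r)$ suffices for the grouping, and the $g_i,p_i$-gating of the constant cost terms is already built into \eqref{eq:FD-path-cost-epigraph}.
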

\textbf{Proposition~\ref{prop:finite_dimensional_convexification}} shows that all nonconvexities in the unit-level PSH problem can be attributed to the selection of a discrete event path. 
Its exact convex hull is therefore obtained by convexifying only this selection, i.e., a finite union of boundary-coupled polyhedra indexed by discrete event paths, while leaving the reservoir, ramping, and operating variables continuous within each path. The proposition thus yields an exact convex-hull extended formulation whose projection onto the original time-indexed scheduling variables equals the convex hull of all feasible schedules, rather than merely matching the optimal value for a given price trajectory.
%Conditional on such a path, the continuous reservoir, ramping, and operating decisions form a boundary-coupled polyhedron. 
%Thus, its exact convex hull is obtained by convexifying the selection among these paths, while retaining the continuous reservoir, ramping, and operating decisions within each path.
%The proposition therefore provides an ideal extended formulation in the original scheduling-variable space, rather than merely an alternative formulation that attains
%the same optimal value for a particular price trajectory. 
%In particular, its projection recovers the entire convex hull of feasible time-indexed schedules while keeping the reservoir and ramping states continuous. 
%This objective-independent characterization makes the formulation a natural unit-level building block for decomposed system scheduling and market-clearing models.

The exact convex-hull formulation may nevertheless contain exponentially many complete discrete paths. 
This motivates two complementary uses of the same structure.

\section{Finite-Grid Event Network and LP Approximation}
\label{sec:LP}
We first restrict the reservoir and ramping boundary states to finite grids, allowing paths that reach the same boundary state to share nodes and arcs in a finite event network. 
The resulting formulation applies the same perspective scaling as the pathwise convexification, now at the arc level.

We discretize the continuous reservoir range $[0,\overline{M}]$ and the ramping-boundary range $[0,\overline{H}^{\mathsf O}]$ onto finite points $\mathcal M:=\{m^1,\ldots,m^{N_M}\}$ and $\mathcal H:=\{h^1,\ldots,h^{N_H}\}$, respectively, where $0\in\mathcal H$ and $M^{\text{Initial}}\in\mathcal M$.
A discretized state is therefore $\hat{s}_t=(x_t,M_t,\bar{H}_t,\tau_t)\in\hat{\mathcal{S}}_t$, where $\hat{\mathcal{S}}_t:=\mathcal X\times \mathcal M\times \mathcal H\times \{0,\ldots,\tau_{\max}\}$.
We take the initial node to be $\hat s_1=(\mathsf O,\,M^{\text{Initial}},\,0,\,0)$.

Each admissible action \(\hat a_t=(j,x^\dagger,M_j,\bar H_j)\) defines an arc from \(\hat s_t=(x_t,M_t,\bar H_t,\tau_t)\) to $\hat s_j=(x^\dagger,M_j,\bar H_j,\tau_j)$, where \(\tau_j\) follows~\eqref{eq:tau}. 
The arc is retained only if the transition is admissible and the corresponding within-event LP is
feasible. Let \(\hat{\mathcal{A}}(\hat{s}_t)\subseteq\{t+1,\ldots,T+1\}\times\mathcal X\times\mathcal{M}\times\mathcal{H}\) denote the resulting arc set.

To select a feasible path, we associate each arc $\hat a_t\in \hat{\mathcal A}(\hat s_t)$ with a nonnegative flow variable $0\leq \pi_{\hat s_t,\hat a_t}\leq 1$.
Sending one unit of flow from the initial node and imposing flow conservation at every intermediate node therefore yields a source-to-sink unit-flow model on an event network. 

To obtain a single LP without precomputing arc costs, we associate a copy of the within-event operating LP with each arc using the same perspective scaling as in Section \ref{sec:finite_dimensional_convexification}.
Consider a node \(\hat s_t=(x_t,M_t,\bar H_t,\tau_t)\) and an arc \(\hat a_t=(j,x^\dagger,M_j,\bar H_j)\).
Because the unit remains in mode \(x_t\) over \(\{t,\ldots,j-1\}\), the activation variables $(g_i,p_i)$ are fixed on this arc.

For an arc \((\hat s_t,\hat a_t)\), we introduce the corresponding scaled
operating variables $\tilde H_i^{\mathsf O}$, $\tilde H_i^{\mathsf I}$, $\tilde M_i$, $\tilde\phi_i^{\mathsf G}$, $\tilde\phi_i^{\mathsf P}$, and $\tilde h_i$ directly as LP variables and apply the same scaling used in \eqref{eq:FD-scaled-variable-definitions}, with the path weight \(\rho_{\tilde\omega}\) replaced by \(\pi_{\hat s_t,\hat a_t}\). 
No bilinear equalities between \(\pi_{\hat s_t,\hat a_t}\) and unscaled operating variables are imposed. The common arcwise operating block is
\begin{subequations}
\label{eq:disc-common-block}
\begin{align}
\tilde c_t(\hat s_t,\hat a_t)
&=
\sum_{i=t}^{j-1}\tilde h_i,
\label{eq:disc-common-cost}\\
g_i\underline B_i^{\mathsf O}
\pi_{\hat s_t,\hat a_t}
&\leq
\tilde H_i^{\mathsf O}
\leq
g_i\bar B_i^{\mathsf O}
\pi_{\hat s_t,\hat a_t},
\label{eq:disc-common-G-cap}\\
p_i\underline B_i^{\mathsf I}
\pi_{\hat s_t,\hat a_t}
&\leq
\tilde H_i^{\mathsf I}
\leq
p_i\bar B_i^{\mathsf I}
\pi_{\hat s_t,\hat a_t},
\label{eq:disc-common-P-cap}\\
-\overline V\pi_{\hat s_t,\hat a_t}
&\leq
\tilde H_i^{\mathsf O}
-\tilde H_{i-1}^{\mathsf O}
\leq
\overline V\pi_{\hat s_t,\hat a_t},
\label{eq:disc-common-ramping}\\
\tilde H_{t-1}^{\mathsf O}
&=
\bar H_t\pi_{\hat s_t,\hat a_t},\ \ 
%\label{eq:disc-common-H-in}\\
\tilde H_{j-1}^{\mathsf O}
=
\bar H_j\pi_{\hat s_t,\hat a_t}.
\label{eq:disc-common-H-boundary}\\
\tilde M_t
&=
M_t\pi_{\hat s_t,\hat a_t},\ \ 
\tilde M_j
=
M_j\pi_{\hat s_t,\hat a_t},
\label{eq:disc-common-M-boundary}\\
0
\leq
\tilde M_i
&\leq
\overline M\pi_{\hat s_t,\hat a_t},
\label{eq:disc-common-M-bounds}\\
\tilde M_{i+1}
&=
\tilde M_i
+(R_i-\Lambda_i)\pi_{\hat s_t,\hat a_t}
\notag\\
&\quad
-\mu_i\tilde H_i^{\mathsf O}
+\alpha_i\tilde H_i^{\mathsf I}, \label{eq:disc-common-reservoir}
\end{align}
\end{subequations}
with the scaled cost epigraphs defined as
\begin{align}
\tilde\phi_i^{\mathsf G}
&\geq
a_{i,m}^{\mathsf G}\tilde H_i^{\mathsf O}
+
b_{i,m}^{\mathsf G}g_i\pi_{\hat s_t,\hat a_t},
\quad 
m=1,\ldots,N_{\mathsf G},
\label{eq:disc-common-G-cost}\\
\tilde\phi_i^{\mathsf P}
&\geq
a_{i,m}^{\mathsf P}\tilde H_i^{\mathsf I}
+
b_{i,m}^{\mathsf P}p_i\pi_{\hat s_t,\hat a_t},
\quad 
m=1,\ldots,N_{\mathsf P},
\label{eq:disc-common-P-cost}\\
\tilde h_i
&\geq
\tilde\phi_i^{\mathsf G}
+\tilde\phi_i^{\mathsf P}
+(\nu\mu_i-\lambda_i)\tilde H_i^{\mathsf O}
+(\lambda_i-\nu\alpha_i)\tilde H_i^{\mathsf I},
\label{eq:disc-common-net-cost}
\end{align}
The stagewise constraints hold for \(i=t,\ldots,j-1\), while
\eqref{eq:disc-common-M-bounds} holds for \(i=t,\ldots,j\).

When \(x_t=\mathsf G\), the pumping variables are fixed to zero and \eqref{eq:disc-common-block} reduces to the scaled generating block. 
When \(x_t=\mathsf P\), the turbine-output variables are fixed to zero and it reduces to the scaled pumping block. 
Both components are active when \(x_t=\mathsf{SC}\), whereas both are zero when \(x_t=\mathsf O\), leaving only the deterministic reservoir evolution and zero operating cost. 
Thus, the same constraints represent all four mode-specific within-event LPs.

For pumping and offline arcs, \eqref{eq:disc-common-G-cap} and \eqref{eq:disc-common-H-boundary} imply \(\bar H_j=0\) whenever the arc carries positive flow. 
For generating and HSC arcs, \eqref{eq:disc-common-H-boundary} passes the terminal turbine output to the successor state. 
If the successor mode is pumping or offline, the ramping constraint of the successor event
enforces the required transition to zero.

Let \(z\) collect all arc-flow variables and the scaled operating variables. Combining \eqref{eq:disc-common-block}--\eqref{eq:disc-common-net-cost} with the unit-flow constraints yields the finite-grid LP
\begin{subequations}
\label{eq:ALLINONE}
\begin{align}
\min_{z}\quad
& \sum_{t=1}^{T}\ \sum_{\hat{s}_t}\ \sum_{\hat{a}_t}
\Big[\tilde c_{t}(\hat{s}_t,\hat{a}_t)+\Gamma(\hat s_t,\hat a_t)\,\pi_{\hat s_t,\hat a_t}\Big], \label{eq:ALLINONE_obj}\\
\text{s.t.}\quad
& \sum_{\hat a_1\in\hat{\mathcal A}(\hat s_1)} \pi_{\hat s_1,\hat a_1} = 1, \label{eq:ALLINONE_source} \\
& \sum_{\hat a_t\in\hat{\mathcal A}(\hat s_t)} \pi_{\hat s_t, \hat a_t}
= \sum_{t',\hat s_{t'},\hat a_{t'}}
\pi_{\hat s_{t'},\hat a_{t'}}\,\mathbbm{1}\!\left\{\hat f(\hat s_{t'},\hat a_{t'})=\hat s_t\right\},\notag\\
&\hspace{2cm} \forall t\in\{2,\ldots,T\},\ \forall \hat s_t\in\hat{\mathcal S}_t, \label{eq:ALLINONE_flow}\\
& \pi_{\hat s_t,\hat a_t}\ge 0,\quad
\forall t\in\mathcal{T},\ \forall \hat{s}_t\in\hat{\mathcal S}_t,\ \forall \hat{a}_t\in\hat{\mathcal A}(\hat{s}_t). \label{eq:ALLINONE_nonneg}
\end{align}
\end{subequations}
Here, \(\tilde c_t(\hat s_t,\hat a_t)\) is the scaled operating cost for the current mode, \(\Gamma(\hat s_t,\hat a_t)\) is the transition cost in \eqref{eq:gamma}, \(\hat f(\hat s_t,\hat a_t)\) denotes the successor node, and \(\mathbbm{1}(\cdot)\) is the indicator function. Since the formulation combines a unit-flow network with arc-separable operating models, it admits an integral optimum corresponding to a source-to-sink path, as formalized below.
\begin{proposition}
\label{prop:3}
The LP defined by \eqref{eq:disc-common-block}--\eqref{eq:ALLINONE} is an extended convex-hull formulation of the finite-grid event-network model. Hence, it admits an optimal solution with integral arc flows, and its optimal value equals that of the finite-grid DP.
\end{proposition}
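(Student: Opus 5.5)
The plan is to combine two standard facts: flow decomposition for unit flows on an acyclic network, and perspective scaling of each arc's within-event polyhedron. The event network is acyclic because every arc goes from stage \(t\) to some \(j\ge t+1\). The source is \(\hat s_1\), and every arc with \(j=T+1\) is a sink arc.

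\textbf{Step 1 (paths give feasible points).} Take a finite-grid DP path, i.e., a source-to-sink sequence of admissible arcs together with feasible within-event LP solutions.

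1. Set \(\pi=1\) on the arcs of the path and \(\pi=0\) elsewhere.
2. On each used arc, set the scaled variables equal to the within-event LP solution.
3. On every unused arc, set all scaled variables to zero. This is feasible because every constraint in \eqref{eq:disc-common-block}--\eqref{eq:disc-common-net-cost} is homogeneous in \((\pi,\tilde\cdot)\).

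With unit flow, the arc's objective contribution is exactly \(c_{t,j}+\Gamma\). Hence the LP value is at most the finite-grid DP value, and every path solution embeds in the LP.

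\textbf{Step 2 (feasible points decompose into paths).} Take any feasible \(z\). The conservation constraints \eqref{eq:ALLINONE_source}--\eqref{eq:ALLINONE_flow} together with acyclicity let us apply the flow-decomposition theorem: \(\pi=\sum_{\omega}\theta_\omega \mathbbm 1_{\omega}\), with \(\theta_\omega\ge0\), \(\sum_\omega\theta_\omega=1\), and \(\omega\) ranging over source-to-sink paths. No cycles occur because the network is acyclic.

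Next, for every arc with \(\pi_{\hat s_t,\hat a_t}>0\), divide its scaled block by \(\pi_{\hat s_t,\hat a_t}\). This yields an unscaled point of the corresponding within-event LP. The constraints \eqref{eq:disc-common-H-boundary}--\eqref{eq:disc-common-M-boundary} pin its boundaries to the grid values \((M_t,\bar H_t)\) and \((M_j,\bar H_j)\) of the arc's endpoints. Its cost is at least \(c_{t,j}(\hat s_t,\hat a_t)\).

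For arcs with zero flow, the bounds \(0\le\tilde M_i\le\overline M\pi\), the capacity bounds, and the ramping bounds force all physical scaled variables to zero. The epigraph variables are then bounded below by \(0\), so their cost contribution is nonnegative, and they can be reset to zero without loss.

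Therefore
\[
\text{objective}(z)\ \ge\ \sum_{\text{arcs}} \pi\,(c+\Gamma)\ =\ \sum_\omega \theta_\omega \sum_{a\in\omega}(c_a+\Gamma_a)\ \ge\ \min_\omega \text{cost}(\omega),
\]
and the right-hand side is the finite-grid DP value. Combined with Step 1, the two optimal values are equal.

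\textbf{Step 3 (convex-hull projection and integral optimum).} The same decomposition proves the hull statement. For each path \(\omega\), the unscaled arc solutions along \(\omega\) concatenate into a feasible grid-restricted schedule. Consecutive arcs share the node state \((M,\bar H)\), so both reservoir continuity and the ramping link across the boundary hold.

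Arcs are shared among paths, so we assign each path the per-arc normalized solution \(\tilde\cdot/\pi\) on every arc it uses. The reconstructed aggregate variables are then \(\sum_\omega\theta_\omega(\text{schedule}_\omega)\), since \(\sum_{\omega\ni a}\theta_\omega=\pi_a\). The projection therefore equals the convex hull of the grid-restricted schedules, with the reverse inclusion coming from Step 1 and convexity of the LP feasible set.

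An optimal path \(\omega^\star\), realized with unit flow as in Step 1, attains the bound in Step 2. This gives an optimal solution with integral arc flows.

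\textbf{Main obstacle.} The delicate point is the sharing of arcs across paths in Step 3. Per-arc normalization is consistent only because the node state fully decouples the arc LPs: each arc's constraints involve only its own variables and its endpoint grid values. I would verify this carefully, including the ramp-down condition \(\bar H_j\le\overline V\) across mode changes, which must be enforced by the successor arc's ramping row anchored at \(\bar H_j\).
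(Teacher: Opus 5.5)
Your proposal is correct and follows essentially the same route as the paper's proof: decompose the unit flow on the acyclic event network into source-to-sink paths, normalize each perspective-scaled block with positive flow (zero-flow blocks are forced to zero), and use the fact that consecutive arcs meet at a common grid node so the concatenated arc solutions form a consistent grid-restricted schedule. The differences are cosmetic. You establish value equality first via the optimal arc costs $c_{t,j}$, and you get the reverse hull inclusion from convexity of the LP feasible set rather than the paper's explicit weighted-sum construction. Like the paper, you rely on the single-source convention that no flow leaves stage-1 nodes other than $\hat s_1$; you should state this explicitly, since \eqref{eq:ALLINONE_flow} imposes conservation only for $t\ge 2$.
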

In a conventional scalar-cost DP,
each within-arc operating problem is solved in advance and represented
in the network only by its optimal cost. The resulting network flow
formulation therefore describes path selection but does not retain the
underlying operating decisions. In contrast,
\textbf{Proposition~\ref{prop:3}} embeds each within-event LP directly
on its arc, retaining the associated dispatch, reservoir, and ramping
variables. It therefore characterizes the convex hull jointly in the
arc flows and physical operating variables, yielding an extended
formulation of the complete grid-restricted scheduling model.
%Unlike conventional convex-hull formulations of finite-state DPs, which characterize only path flows after scalar arc costs are specified, \textbf{Proposition~\ref{prop:3}} characterizes the convex hull jointly in the arc-flow and operating variables. Embedding each arc's within-event LP preserves its feasible operating trajectories and yields an extended formulation of the complete grid-restricted scheduling model.

%embeds the boundary-conditioned within-event LPs and convexifies the complete grid-restricted operating model in its arc-flow, dispatch, reservoir, and ramping variables, i.e., a full variable convex hull. 
%This objective-independent unit-level formulation can be embedded directly in network-constrained scheduling, used as an LP pricing subproblem in column generation, or represented through primal--dual optimality conditions in bilevel market models.

\textbf{Remark 1 (Comparison with stagewise discretization).}
\textit{The proposed formulation discretizes the reservoir and ramping states only at event boundaries, whereas conventional stagewise discretization restricts them at every stage. The stagewise DP model is recovered by restricting each event to a single stage. Hence, when the same grids are used, every stagewise-grid feasible schedule remains feasible in the event-boundary model, and the optimal cost of the event-based LP cannot exceed that of the stagewise DP. Table~\ref{tab:grid-resolution} shows that boundary-only discretization can substantially improve accuracy.}

\textbf{Proposition~\ref{prop:3}} also indicates that the LP introduces no relaxation error relative to the grid-restricted event model.
Its only loss relative to the continuous-state formulation is boundary-state discretization, which \textbf{Proposition~\ref{prop:4}} quantifies.
\begin{proposition}
\label{prop:4}
Let \(\Delta_M\) and \(\Delta_H\) denote the mesh sizes of the reservoir and ramping-boundary grids. Suppose every feasible continuous event admits a feasible grid perturbation with the same switching time and successor mode and boundary error at most \(C_B(\Delta_M+\Delta_H)\), where \(C_B\) is independent of the grid resolution. Then the finite-grid and continuous-state DP optimal values differ by at most \(C(\Delta_M+\Delta_H)\), where \(C\) is independent of the grid resolution.
\end{proposition}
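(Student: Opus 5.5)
We bound the two one-sided gaps separately. Write \(V_1^{\mathsf c}(s_1)\) for the continuous-state DP optimum and \(V_1^{\mathsf d}(\hat s_1)\) for the finite-grid DP optimum. By Proposition~\ref{prop:3}, the latter equals the optimum of the finite-grid LP.

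The lower direction is structural. Every grid state lies in \(\mathcal S_t\), and every retained arc is an admissible continuous event. Every finite-grid path is therefore a feasible continuous event trajectory with identical within-event LP costs and transition costs \(\Gamma\). Hence \(V_1^{\mathsf c}(s_1)\le V_1^{\mathsf d}(\hat s_1)\).

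For the upper direction, take an optimal continuous event trajectory. It exists by Proposition~\ref{prop:finite_dimensional_convexification}: a single optimal discrete path \(\tilde\omega\) with boundaries \(t_1<\dots<t_{K+1}\), together with continuous boundary states \((M_{t_k},\bar H_{t_k})\) and within-event dispatch. Apply the hypothesis event by event to build a grid trajectory along the same discrete path. Since switching times and successor modes are unchanged, the counters \(\tau\) and the costs \(\Gamma\) coincide. Each perturbed event's boundary states differ from the original by at most \(C_B(\Delta_M+\Delta_H)\). The key point is that these perturbations must be chained consistently: the perturbed outgoing boundary of event \(k\) must be the perturbed incoming boundary of event \(k+1\). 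I would read the hypothesis as applying to the whole trajectory, with a common grid point assigned to each shared boundary. If it only guarantees this event by event, I would instead perturb the boundary vector \((M_{t_k},\bar H_{t_k})_{k}\) once and invoke the hypothesis for each event with both endpoints prescribed.

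The main step is a Lipschitz estimate on the within-event cost. I would show that \(c_{t,j}(s_t,e_t)\) changes by at most \(L_c\) times the change in \((M_t,\bar H_t,M_j,\bar H_j)\), with \(L_c\) independent of the grid. The argument is LP parametric sensitivity: the boundary data enter only the right-hand sides of \eqref{eq:cG-primal-ramp-initial}, \eqref{eq:cG-primal-init-M}--\eqref{eq:cG-primal-terminal-H} and their pumping and HSC analogues. On the region where the LP is feasible, the optimal value is therefore a convex piecewise-linear function of these right-hand sides. Its slope is bounded by the magnitude of the dual multipliers, and those multipliers range over the finitely many vertices of the dual polyhedron, which depend only on \(T\), \(\mu_i,\alpha_i,\lambda_i,\nu\) and the cost pieces \(a,b\).

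This is the expected obstacle, because near the boundary of the feasible region the dual can be unbounded along rays. The fix is to use the fact that the perturbed event is known to be feasible, by hypothesis. Then Hoffman's lemma applied to the primal constraint system gives a feasible perturbed dispatch within distance \(\kappa\, C_B(\Delta_M+\Delta_H)\) of the original optimal dispatch, where \(\kappa\) is a Hoffman constant of the fixed constraint matrix. Because each epigraph cost is Lipschitz in \((H^{\mathsf O},H^{\mathsf I})\) with constant \(\max|a|+|\nu\mu-\lambda|+|\lambda-\nu\alpha|\), the cost increase per stage is linear in the perturbation.

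Summing over at most \(T\) events and stages then gives
\[
V_1^{\mathsf d}(\hat s_1)\le V_1^{\mathsf c}(s_1)+C(\Delta_M+\Delta_H),
\]
with \(C=T\,\kappa\,C_B\,L\), where \(L\) is the maximal per-stage cost slope; \(C\) is independent of the grid resolution. Combined with the lower direction, this proves the claim.
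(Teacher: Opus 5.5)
Your proof is correct and has the same skeleton as the paper's. You use the structural inequality $V_1^{\mathsf c}(s_1)\le V_1^{\mathsf d}(\hat s_1)$, which holds because grid paths are continuous paths. You adopt the same chained reading of the hypothesis, with one common grid value at each shared boundary; the paper adopts this reading explicitly too. You then bound the per-event cost change linearly in the boundary error and sum over at most $T$ events.

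The difference is in how the per-event estimate is justified. The paper asserts the standard fact that, for fixed discrete event data, the value of the bounded within-event LP is polyhedral and Lipschitz on its feasible boundary domain. This gives a two-sided estimate with one constant over the finitely many event labels. You prove only the one-sided estimate actually needed. Hoffman's bound yields a dispatch feasible for the grid boundaries within $\kappa$ times the right-hand-side change of the continuous optimum, and the per-stage epigraph costs are Lipschitz in $(H^{\mathsf O},H^{\mathsf I})$. This is valid, makes the use of the feasibility hypothesis explicit, and gives a concrete constant.

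The obstacle that led you to abandon the dual argument is, however, not real. The LP is bounded, so the dual feasible set is nonempty. Written with equalities as pairs of inequalities, it is also pointed and independent of the boundary data. Whenever the primal is feasible at a right-hand side $b$, Farkas' lemma implies the dual objective does not increase along any dual recession direction. Hence the value at $b$ is the maximum of finitely many linear functions of $b$, one for each dual vertex. It follows that two feasible right-hand sides $b,b'$ give values differing by at most $\max_{y}\|y\|_1\,\|b-b'\|_\infty$, the maximum taken over those vertices. Unbounded dual optimal faces at the edge of the domain affect only subgradients, not this estimate between two feasible points. This is exactly the fact the paper relies on.

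Finally, your fallback for a purely event-by-event hypothesis does not work as stated. The hypothesis only asserts that some feasible perturbation exists; it does not let you prescribe both endpoints. You therefore need the chained reading, as the paper does.
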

%For standalone computation, the within-event LPs can be optimized independently to obtain scalar arc costs, after which the remaining problem is a shortest-path problem on the event DAG. 
%This specialized implementation is equivalent to the monolithic finite-grid LP, while the latter exposes the physical operating variables needed for decomposition and system-level embedding.
%At the system level, the finite-grid convex-hull LP can be embedded directly in network-constrained scheduling, used as an LP pricing subproblem in column generation, or represented through primal--dual optimality conditions in bilevel market models. 
%Because it exactly convexifies the grid-restricted unit model, these applications introduce no additional unit-level relaxation error beyond boundary-state discretization.

\textbf{Remark 2 (Computational Complexity).} \textit{Define \(\bar\tau:=\tau_{\max}+1\), \(N_M:=|\mathcal M|\), and \(N_H:=|\mathcal H|\). 
With four operating modes, the discretized state space at each stage contains \(4N_MN_H\bar\tau\) states. Consequently, the full event network contains \(O(TN_MN_H\bar\tau)\) states and, in the worst case, \(O(T^2N_M^2N_H^2\bar\tau)\) arcs. 
Once the arc costs have been computed, the remaining optimization reduces to a shortest-path problem and can be solved in \(O(|\hat{\mathcal S}|+|\hat{\mathcal A}|)\).}

For fixed inflow and spillage profiles, the event-network topology is independent of electricity prices and can therefore be reused across price scenarios. 
Varying inflows may change arc feasibility but can be represented using a common superset of arcs with scenario-specific constraints. 
The arc-embedded LP in \textbf{Proposition~\ref{prop:3}} also suggests a potential LP-recourse formulation with dual-based Benders cuts after parameterizing the boundary states and imposing nonanticipativity. 
Developing this stochastic extension is left for future research.

\section{Event-Based B\&B}
\label{sec:ebb}

This section presents an exact B\&B method for independently verifying the continuous-state event-based DP in Section \ref{sec:event_blocks}.
The method branches only on the discrete event decisions, while the reservoir and ramping boundary states remain continuous.

At stage \(t\), define the reduced state $\tilde s_t:=(t,x_t,\tau_t)$, and let \(\widetilde{\mathcal S}_t\) denote the corresponding finite state set. 
A B\&B node \(n\) consists of a fixed discrete partial path \(P_n\) ending at \(\tilde s_t\). 
Thus, distinct partial paths ending at the same reduced state are treated as distinct B\&B nodes.

\textbf{Branching.}
From \(\tilde s_t\), the algorithm enumerates feasible actions $\tilde a_t=(j,x^\dagger)\in\widetilde{\mathcal A}(\tilde s_t)$.
Under this action, the unit remains in mode \(x_t\) over \(\{t,\ldots,j-1\}\) and enters mode \(x^\dagger\) at stage \(j\).
The successor reduced state is $\tilde f(\tilde s_t,\tilde a_t) = (j,x^\dagger,\tau_j)$, where \(\tau_j\) is determined by \eqref{eq:tau}. 
Each child node appends one such action to \(P_n\). Only the discrete prefix is fixed; all continuous decisions on the prefix and the remaining horizon are jointly reoptimized at every child node.

\textbf{Upper bound.}
Once a complete discrete path is obtained, we solve its continuous path LP, which jointly optimizes the reservoir, ramping, and within-event operating decisions along that path. 
If the path LP is feasible and its objective value is smaller than the current incumbent, that value becomes the new upper bound \(\mathrm{UB}\).

\textbf{Lower bound.}
At node \(n\), we relax the remaining discrete event path into a full-horizon network flow. Arc-selection variables are allowed to be fractional, and products between these flows and the continuous boundary states are replaced by their McCormick envelopes. 
Because every feasible completion of node \(n\) remains feasible in this LP, its optimal value
provides a lower bound.

For each reduced state \(\tilde s_r\) and action \(\tilde a_r\in\widetilde{\mathcal A}(\tilde s_r)\), let \(\pi_{\tilde s_r,\tilde a_r}\in[0,1]\) denote the flow on the corresponding arc. The flows satisfy
\begin{subequations}
\label{eq:ebb-flow}
\begin{align}
\sum_{\tilde a_1}
\pi_{\tilde s_1,\tilde a_1}
&=1,
\label{eq:ebb-flow-source}\\
\sum_{\tilde a_r}
\pi_{\tilde s_r,\tilde a_r}
&=
\sum_{r'=1}^{r-1}
\sum_{\tilde s_{r'}, \tilde a_{r'}}
\pi_{\tilde s_{r'},\tilde a_{r'}}
\mathbbm 1
\left\{
\tilde f(\tilde s_{r'},\tilde a_{r'})
=\tilde s_r
\right\},
\label{eq:ebb-flow-conservation}
\end{align}
\end{subequations}
for \(r=2,\ldots,T\) and \(\tilde s_r\in\widetilde{\mathcal S}_r\).
The prefix defining node \(n\) is enforced by $\pi_{\tilde s_r,\tilde a_r}=1$ with $(\tilde s_r,\tilde a_r)\in P_n$.

Unlike the finite-grid formulation, the reservoir and ramping states associated with each reduced state remain continuous, with $0\leq M_{\tilde s_r}\leq\overline M$ and $0\leq\bar H_{\tilde s_r}\leq\overline H^{\mathsf O}$.
Consider an arc \((\tilde s_r,\tilde a_r)\) whose successor is \(\tilde s_j=\tilde f(\tilde s_r,\tilde a_r)\). 
We introduce \(m^{\mathrm{in}}_{\tilde s_r,\tilde a_r}\) and \(m^{\mathrm{out}}_{\tilde s_r,\tilde a_r}\) to represent the flow-weighted reservoir states at \(\tilde s_r\) and \(\tilde s_j\), respectively. 
The variables \(h^{\mathrm{in}}_{\tilde s_r,\tilde a_r}\) and \(h^{\mathrm{out}}_{\tilde s_r,\tilde a_r}\) represent the analogous flow-weighted ramping states.

The boundary state carried by an arc must depend on whether that arc is selected. In an integral event path, the exact relationships are $m^{\mathrm{in}}_{\tilde s_r,\tilde a_r} = \pi_{\tilde s_r,\tilde a_r}M_{\tilde s_r}$, $m^{\mathrm{out}}_{\tilde s_r,\tilde a_r}= \pi_{\tilde s_r,\tilde a_r}M_{\tilde s_j}$, $h^{\mathrm{in}}_{\tilde s_r,\tilde a_r}=\pi_{\tilde s_r,\tilde a_r}\bar H_{\tilde s_r}$, and $h^{\mathrm{out}}_{\tilde s_r,\tilde a_r}=\pi_{\tilde s_r,\tilde a_r}\bar H_{\tilde s_j}$.
Thus, a selected arc carries the boundary values of its incident states, whereas an unselected arc carries zero. 
Because the lower-bound formulation allows \(\pi_{\tilde s_r,\tilde a_r}\in[0,1]\), these identities become bilinear. For each intended product \(z=\pi q\), where
\(0\leq\pi\leq1\) and \(0\leq q\leq\overline q\), we impose McCormick envelope
\begin{equation}
0\leq z\leq\overline q\,\pi,
\qquad
z\leq q,
\qquad
z\geq q-\overline q(1-\pi).
\label{eq:ebb-mccormick}
\end{equation}
The lifted boundary values are propagated through the event network.
For every intermediate reduced state \(\tilde s_r\), we impose
\begin{subequations}
\label{eq:ebb-lifted-conservation}
\begin{align}
\sum_{\tilde a_r}
m^{\mathrm{in}}_{\tilde s_r,\tilde a_r}
&=
\sum_{r'=1}^{r-1}
\sum_{\tilde s_{r'},\tilde a_{r'}}
m^{\mathrm{out}}_{\tilde s_{r'},\tilde a_{r'}}
\mathbbm 1
\{
\tilde f(\tilde s_{r'},\tilde a_{r'})
=\tilde s_r
\},
\label{eq:ebb-reservoir-conservation}\\
\sum_{\tilde a_r}
h^{\mathrm{in}}_{\tilde s_r,\tilde a_r}
&=
\sum_{r'=1}^{r-1}
\sum_{\tilde s_{r'},\tilde a_{r'}}
h^{\mathrm{out}}_{\tilde s_{r'},\tilde a_{r'}}
\mathbbm 1
\{
\tilde f(\tilde s_{r'},\tilde a_{r'})
=\tilde s_r
\}.
\label{eq:ebb-ramping-conservation}
\end{align}
\end{subequations}
At the initial state, the boundary variables and their lifted counterparts satisfy $M_{\tilde s_1} =M^{\mathrm{Initial}}$, $\bar H_{\tilde s_1}=0$, $m^{\mathrm{in}}_{\tilde s_1,\tilde a_1}= M^{\mathrm{Initial}} \pi_{\tilde s_1,\tilde a_1}$, and $h^{\mathrm{in}}_{\tilde s_1,\tilde a_1}=0$, where $\tilde a_1\in\widetilde{\mathcal A}(\tilde s_1)$.

For each arc, we impose the mode-specific perspective block obtained from \eqref{eq:FD-path-physical}--\eqref{eq:FD-path-cost-epigraph}, replacing the path weight by the arc flow and the fixed grid boundaries by the lifted continuous boundary variables $\tilde M_r = m^{\mathrm{in}}_{\tilde s_r,\tilde a_r}$ and $\tilde M_j =m^{\mathrm{out}}_{\tilde s_r,\tilde a_r}$.
For generating and HSC arcs, we additionally impose $\tilde H_{r-1}^{\mathsf O} = h^{\mathrm{in}}_{\tilde s_r,\tilde a_r}$ and $\tilde H_{j-1}^{\mathsf O} =
h^{\mathrm{out}}_{\tilde s_r,\tilde a_r}$.
For pumping and offline arcs, \(h^{\mathrm{out}}_{\tilde s_r,\tilde a_r}=0\). 
If \(j\leq T\), \(x_r\in\{\mathsf G,\mathsf{SC}\}\), and \(x^\dagger\in\{\mathsf P,\mathsf O\}\), the ramp-down transition is enforced by $h^{\mathrm{out}}_{\tilde s_r,\tilde a_r}\leq \overline V\pi_{\tilde s_r,\tilde a_r}$.
The remaining mode-specific constraints follow the same arcwise specialization and are not repeated.
%All other operating constraints retain the perspective-scaled forms given in \S\ref{sec:LP}; thus, they need not be repeated here.

For \(\tilde a_r=(j,x^\dagger)\), define $\Gamma(\tilde s_r,\tilde a_r):=\Gamma_{r,j}(x_r,x^\dagger)$. Let \(\underline V(n)\) denote the optimal value of the resulting node relaxation:
\begin{equation}
\underline V(n)
=
\min
\sum_{r=1}^{T}
\sum_{\tilde s_r, \tilde a_r}
\left[
\tilde c_r(\tilde s_r,\tilde a_r)
+
\Gamma(\tilde s_r,\tilde a_r)
\pi_{\tilde s_r,\tilde a_r}
\right].
\label{eq:ebb-node-lb}
\end{equation}
Any feasible completion consistent with \(P_n\) induces an integral flow satisfying this relaxation. 
Therefore, \(\mathrm{LB}(n):=\underline V(n)\) is a valid lower bound on every completion of node \(n\). 
The node is pruned whenever \(\mathrm{LB}(n)\geq\mathrm{UB}\), up to the prescribed numerical tolerance. 
The following proposition formalizes finite termination and exactness.
\begin{proposition}
\label{prop:5}
The B\&B algorithm terminates after finitely many branching steps and returns the optimal value of the continuous-state event-based formulation.
\end{proposition}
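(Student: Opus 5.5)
We split the claim into finite termination and exactness.

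\emph{Finite termination.} Each B\&B node is a discrete partial path $P_n$ of reduced states $\tilde s_r=(r,x_r,\tau_r)$ and actions $\tilde a_r=(j,x^\dagger)$ with $j>r$. Stage indices therefore strictly increase along any path. Every branch appends one action and its depth is at most $T$. The set of actions from any reduced state is finite, since $j\le T+1$, $x^\dagger\in\mathcal X$, and $\tau$ is bounded by $\tau_{\max}$. Hence the search tree is a finite tree whose leaves are exactly the complete paths in $\widetilde\Omega$. Each node is processed once and each node LP has finite dimension, so the algorithm stops after finitely many branching steps, at the latest once all of $\widetilde\Omega$ has been enumerated.

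\emph{Exactness.} We use three facts.
\begin{enumerate}
\item[(i)] For a complete path $\tilde\omega$, the path LP is the single-path restriction of $\mathcal P^{\mathsf{FD}}$ with $\rho_{\tilde\omega}=1$. Its optimal value is therefore the minimum of $C(q)$ over schedules in $\mathcal Y$ that follow $\tilde\omega$. By Proposition~\ref{prop:1} and Proposition~\ref{prop:finite_dimensional_convexification}, the minimum of these path values over $\widetilde\Omega$ equals $V_1(s_1)$, and an optimal path $\tilde\omega^\star$ exists. Consequently every incumbent $\mathrm{UB}$ is attained by a feasible schedule, so $\mathrm{UB}\ge V_1(s_1)$ throughout.
\item[(ii)] The node relaxation is a valid lower bound. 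Take any feasible completion of $P_n$, consisting of a complete path together with its continuous trajectory. Set $\pi=1$ on the arcs of that path and $0$ elsewhere, and set the lifted variables to the exact products $\pi M$, $\pi\bar H$, with the scaled within-event variables equal to the unscaled ones on selected arcs and $0$ on unselected arcs. Then flow conservation \eqref{eq:ebb-flow} holds. The McCormick inequalities \eqref{eq:ebb-mccormick} hold because they are valid for exact products. The lifted conservation \eqref{eq:ebb-lifted-conservation} holds because at each visited state exactly one incoming and one outgoing arc carry the common boundary value. The prefix fixings hold by construction. The perspective blocks reduce to the within-event LPs, whose costs sum to $C(q)$. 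Thus $\underline V(n)\le$ the cost of every completion of $n$.
\item[(iii)] No optimal path is lost. Assume $\mathrm{LB}(n)\ge\mathrm{UB}$ holds for every node $n$ on the root-to-$\tilde\omega^\star$ branch. Then the incumbent already satisfies $\mathrm{UB}\le V(\tilde\omega^\star)=V_1(s_1)$. Otherwise, that branch is never pruned, so the leaf $\tilde\omega^\star$ is reached and its path LP sets $\mathrm{UB}\le V_1(s_1)$.
\end{enumerate}
Combining (i) and (iii), the returned value equals $V_1(s_1)$, and with a tolerance it lies within that tolerance of $V_1(s_1)$.

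\emph{Main obstacle.} We expect step (ii) to need the most care. One must check that every arcwise specialization is satisfied by the integral lifting. This includes the ramp-down condition $h^{\mathrm{out}}\le\overline V\pi$, the requirement $h^{\mathrm{out}}=0$ on pumping and offline arcs, and the terminal arcs with $j=T+1$, where no successor state exists and the corresponding conservation constraints are absent. One must also check that the perspective block with $\pi=1$ reproduces exactly the within-event LP feasible set of Section~\ref{sec:event_lp_modules}. We would handle this mode by mode using the correspondence already established in Proposition~\ref{prop:1}. We would also confirm that the bounds $0\le M\le\overline M$ and $0\le\bar H\le\overline H^{\mathsf O}$ used in \eqref{eq:ebb-mccormick} are valid for all feasible boundary values, so that the McCormick envelope does not cut off any integral point.
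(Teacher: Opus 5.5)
Your proposal is correct and follows essentially the same argument as the paper: finiteness from the strictly increasing stage index and finite action sets, validity of $\mathrm{LB}(n)$ via the integral lifting of any completion, and a case split on whether the optimal leaf is evaluated or some ancestor is pruned, using $V_1(s_1)\le\mathrm{UB}\le\mathrm{LB}(n)\le z(\tilde\omega^\star)=V_1(s_1)$. One slip to fix: in (iii) the hypothesis should read ``for some node $n$ on the branch'' (the pruned ancestor), not ``for every node''; otherwise the ``Otherwise'' case does not imply that the branch is never pruned.
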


\section{Numerical Results}
\label{sec:num}

We compare the finite-grid LP with the time-indexed MILP, using the event-based
B\&B only to verify theoretical exactness. The experiments include a \(T=24\) standalone
single-unit instance adapted from \cite{Qu2023} and modified IEEE RTS-24 cases with three PSH units over \(T=336\) and nine units over \(T=672\) (see, e.g., \cite{Zimmerman2011MATPOWER}). The system-level implementations use
the same column generation (CG) framework. Appendix~\ref{app:numerical-settings} provides the complete settings.
%All algorithms are implemented in C++, with the MILPs and LPs solved using Gurobi 12.0. The experiments are conducted on a workstation with a 24-core Intel(R) Core(TM) i9-14900K processor and 128~GB RAM.

\subsection{Results}
\label{sec:results}

Table~\ref{tab:exactness} shows that the finite-grid DP, LP, and event-based B\&B attain the same optimum of \(-92{,}360\) on common boundary-state grids, consistent with \textbf{Proposition~\ref{prop:3}}. 
%The grid-restricted event-based B\&B method also attains \(-92{,}360\), providing an independent confirmation of this optimum. 
In the last two rows, the boundary states are continuous. The continuous-state event-based B\&B method and the time-indexed MILP both attain \(-94{,}385\), illustrating the combined implications of \textbf{Propositions~\ref{prop:1}} and~\textbf{\ref{prop:5}}. 
Because rows 3 and 4 differ only in whether the boundary states are discretized, their \(2.15\%\) objective difference isolates the effect of discretization. Thus, the finite-grid LP introduces no additional relaxation error, consistent with \textbf{Proposition~\ref{prop:3}}.

\begin{table}[!h]
\centering
\caption{Exactness comparisons with a \(2\%\) reservoir-state grid}
\label{tab:exactness}
\renewcommand{\arraystretch}{1.15}
\begin{tabular}{ccc}
\hline
Boundary-state treatment & Method & Objective value (\$) \\ \hline
Grid-restricted & DP               & \(-92{,}360\) \\
Grid-restricted & LP               & \(-92{,}360\) \\
Grid-restricted & Event-based B\&B & \(-92{,}360\) \\
Continuous-state & Event-based B\&B & \(-94{,}385\) \\
Continuous-state & MILP             & \(-94{,}385\) \\ \hline
\end{tabular}
\end{table}

\begin{table*}[!b]
\centering
\scriptsize
\setlength{\dbltextfloatsep}{6pt plus 2pt minus 2pt}
\renewcommand{\arraystretch}{1.05}
\caption{Comparison of event-boundary and stagewise discretization
under reservoir-grid refinement at \(T=24\). The event-boundary LP uses
\(N_H=6\), whereas the stagewise DP uses a finer ramping grid with
\(N_H=30\).}
\label{tab:grid-resolution}
\begin{tabular}{rrrrrrrr}
\toprule
& & \multicolumn{3}{c}{Event-boundary LP (\(N_H=6\))}
& \multicolumn{3}{c}{Stagewise DP (\(N_H=30\))}\\
\cmidrule(lr){3-5}\cmidrule(lr){6-8}
\(\Delta_M/\overline M\) (\%)
& \(N_M\)
& Objective (\$)
& Gap (\%)
& Time (s)
& Objective (\$)
& Gap (\%)
& Time (s)\\
\midrule
25.0000 &   5
& \(-83{,}775.000\) & 11.241 &  0.100
& \(0.000\)          & 100.000 & 0.000197\\
12.5000 &   9
& \(-84{,}535.000\) & 10.436 &  0.278
& \(0.000\)          & 100.000 & 0.001544\\
6.2500  &  17
& \(-85{,}535.000\) &  9.376 &  0.581
& \(-58{,}983.125\) & 37.508 & 0.010493\\
3.1250  &  33
& \(-90{,}735.000\) &  3.867 &  1.228
& \(-62{,}171.250\) & 34.130 & 0.028680\\
1.5625  &  65
& \(-93{,}108.750\) &  1.352 &  3.006
& \(-62{,}171.250\) & 34.130 & 0.102541\\
0.7813  & 129
& \(-93{,}647.813\) &  0.781 &  8.761
& \(-64{,}466.250\) & 31.699 & 0.399467\\
0.3906  & 257
& \(-93{,}870.469\) &  0.545 & 29.354
& \(-65{,}361.250\) & 30.750 & 1.294658\\
\bottomrule
\end{tabular}
\end{table*}

Table~\ref{tab:grid-resolution} examines reservoir-grid refinement with fixed ramping grids of \(N_H=6\) for the event-based LP and \(N_H=30\) for the stagewise DP. For the event-based LP, increasing \(N_M\) from \(5\) to \(257\) monotonically reduces the gap from \(11.24\%\) to \(0.55\%\), while runtime rises from \(0.10\) to \(29.35\)~s, illustrating the accuracy--computation tradeoff and the error bound in \textbf{Proposition~\ref{prop:4}}.

Table~\ref{tab:grid-resolution} also compares the event-boundary LP with a stagewise finite-grid DP, both of which solve their respective grid-restricted models exactly. Stagewise discretization is much more restrictive, returning the all-offline solution on the two coarsest grids. At comparable runtimes, the event-boundary LP attains a \(3.867\%\) gap using \(N_M=33\) and \(N_H=6\) in \(1.228\)~s, whereas the stagewise DP retains a \(30.75\%\) gap despite using much finer grids \((N_M=257\) and \(N_H=30)\) in \(1.295\)~s. Thus, retaining continuous within-event trajectories provides substantially greater accuracy for a given computational effort.

\begin{table}[h!]
\centering
\caption{Results for the 336-stage RTS-24 instance.}
\label{tab:end-to-end}
\scriptsize
\setlength{\tabcolsep}{4pt}
\renewcommand{\arraystretch}{1.08}
\begin{tabular}{@{}lrr@{}}
\toprule
Metric & MILP & Finite-grid LP \\
\midrule
System cost (\$)       & 3,746,325.93 & 3,746,758.50 \\
System-cost gap (\%)   & 0            & 0.01       \\
CG iterations          & 334          & 313          \\
Generated columns      & 350          & 321          \\
Pricing time (s)       & 434.10       & 10.42        \\
Total time (s)         & 471.55       & 36.49        \\
Load shedding (MWh)    & 0            & 0            \\
Curtailment (MWh)      & 535.58       & 462.64       \\
\bottomrule
\end{tabular}
\end{table}

\begin{figure}[h!]
    \centering
    \includegraphics[scale=0.77]
    {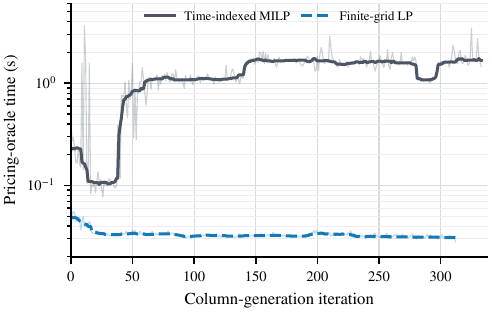}
    \caption{Per-iteration pricing time during CG.
    Light traces report individual iterations, whereas thick curves report
    15-iteration rolling medians.}
    \label{fig:pricing-time-per-iteration}
\end{figure}

We next examine whether finite-grid pricing preserves system-level solution quality while reducing computation. Table~\ref{tab:end-to-end} shows that the finite-grid LP increases system cost by only \(0.01\%\), while reducing total solution time from \(471.55\)~s to \(36.49\)~s, a \(12.9\times\) speedup. Pricing time is reduced by a factor of \(41.7\), with comparable numbers of iterations and generated columns. Thus, the finite-grid approximation has negligible impact on system cost while substantially accelerating PSH pricing.

Figure~\ref{fig:pricing-time-per-iteration} shows a widening difference in pricing performance over the CG iterations. The finite-grid LP pricing time quickly decreases to approximately \(0.03\)~s per iteration and remains nearly constant, whereas the time-indexed MILP pricing time generally increases to \(1\) to \(2\)~s. 
%This pattern is consistent with fewer event arcs remaining competitive as the dual variables stabilize, while the MILP requires increasing effort to certify near-zero reduced costs. 
%These per-iteration differences produce the cumulative pricing times reported in Table~\ref{tab:end-to-end}.

To further test the scalability of the proposed LP approximation, we scale up the above RTS-24 instance by increasing the number of PSH units from 3 to 9 and extending the horizon from \(T=336\) to \(T=672\). Table~\ref{tab:end-to-end-large} and Fig.~\ref{fig:pricing-large} demonstrate that the computational advantage of the finite-grid LP increases substantially for the larger instance. The LP converges in \(594.34\)~s despite requiring \(1{,}014\) CG iterations, whereas the MILP fails to converge within two hours after 589 iterations with a final reduced cost of $-0.78$. By that point, the MILP had accumulated \(6{,}459.12\)~s of pricing time, \(138.3\times\) the \(46.72\)~s required by the complete LP run. As CG progresses, the LP pricing time remains near \(0.04\) to \(0.05\)~s per iteration, while the MILP pricing time increases to approximately \(10\) to \(20\)~s and becomes more variable. The LP system cost is \(0.02\%\) above the MILP value available at the time limit, although this comparison is provisional as the MILP run did not converge. 

\begin{table}[!h]
\centering
\caption{Results for the 672-stage RTS-24 instance with nine PSH units.}
\label{tab:end-to-end-large}
\scriptsize
\setlength{\tabcolsep}{4pt}
\renewcommand{\arraystretch}{1.08}
\begin{tabular}{@{}lrr@{}}
\toprule
Metric & MILP & Finite-grid LP \\
\midrule
Converged                         & No             & Yes          \\
System cost (\$)                  & 7,482,570.62    & 7,484,032.04 \\
Difference from MILP value (\%)  & 0              & 0.02       \\
CG iterations                     & 589            & 1,014        \\
Generated columns                 & 2,253          & 2,495        \\
Final reduced cost                & $-0.78$      & 0             \\
Pricing time (s)                  & 6,459.12       & 46.72         \\
Total time (s)                    & 7,200 (limit)  & 594.34        \\
Load shedding (MWh)               & --             & 0             \\
Curtailment (MWh)                 & --             & 0             \\
\bottomrule
\end{tabular}

\vspace{2mm}
\parbox{\columnwidth}{\scriptsize\textit{Note:} The MILP entries are values at
the two-hour limit. Because the MILP did not converge, the
reported cost difference is provisional rather than an optimality gap.}
\end{table}

\begin{figure}[!h]
\centering
\includegraphics[scale=0.77]{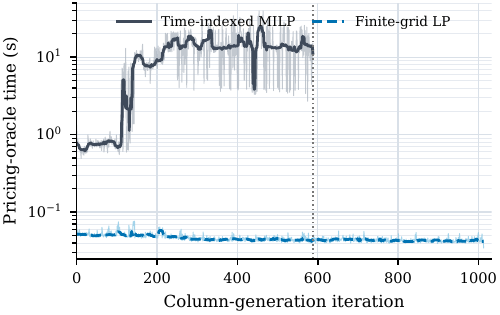}
\caption{Per-iteration pricing time during CG for the 672-stage instance. 
Light traces report individual iterations, while dark curves report 15-iteration rolling medians. 
The vertical dotted line marks the two-hour limit for the MILP run.}
\label{fig:pricing-large}
\end{figure}

We also conducted tests on modified IEEE 300-bus test-system instances and obtained results qualitatively similar to those from the RTS-24 cases. We omit these results for brevity.\looseness = -1

\section{Conclusions}
\label{sec:conc}
This paper develops an event-based framework for single-unit PSH scheduling that separates discrete mode transitions from continuous operation. We establish its equivalence to the time-indexed MILP and show that enumerating only the discrete event paths yields an ideal finite-dimensional convexification while retaining continuous reservoir and ramping states. This structure leads to a finite-grid convex hull LP with a controllable discretization-error bound and an exact event-based B\&B method used to verify the developed continuous-state theory. 
Standalone and system-level experiments demonstrate the accuracy and scalability of the finite-grid LP for repeated PSH pricing. 
%In the enlarged 672-stage RTS-24 instance, the finite-grid approach converges within ten minutes, whereas the time-indexed MILP pricing approach does not converge within the two-hour limit.
\vspace{-0.9em}
\bibliographystyle{IEEEtran}
\bibliography{ref}

\newpage 
\appendices
\section{Numerical Experiment Settings}
\label{app:numerical-settings}

\subsection{Standalone Single-Unit Experiments}
\label{sec:common-setup}

The standalone benchmark is adapted from \cite{Qu2023}, with additional parameters introduced for the four-mode formulation. 
We consider one-hour stages with \(T=24\) and the operating-mode set $\mathcal X=\{\mathsf G,\mathsf P,\mathsf{SC},\mathsf O\}$.

The implemented minimum event durations are $D_{\mathsf G}^{\min}=D_{\mathsf P}^{\min}=6$ and $D_{\mathsf{SC}}^{\min}=D_{\mathsf O}^{\min}=1$.
An event beginning near the end of the horizon may be truncated at the
terminal boundary. Generating, pumping, and offline events may span the
entire remaining horizon. HSC events are represented using blocks of at
most four stages. Consecutive HSC blocks are allowed, retain reservoir
and ramping continuity, and incur zero transition cost. Therefore, the
four-stage HSC block limit is a computational event-length limit rather
than a restriction on the total number of consecutive HSC stages.

The pumping efficiency and generation water-conversion coefficient are $\alpha=0.75$ and $\mu=0.75$.
Natural inflow and spillage are zero at every stage. The reservoir dynamics are therefore
\[
M_{t+1}
=
M_t-\mu H_t^{\mathsf O}
+\alpha H_t^{\mathsf I}.
\]
The initial and terminal boundary states are fixed as
\begin{align*}
x_1&=\mathsf O,
&
M_1&=450,
&
\bar H_1&=0,\\
x_{T+1}&=\mathsf O,
&
M_{T+1}&=450,
&
\bar H_{T+1}&=0.
\end{align*}
Thus, the unit begins and terminates offline with the same reservoir level and a zero terminal turbine-output boundary. 
The reservoir capacity is $\overline M=900$.

Turbine generation is bounded by $40\le H_t^{\mathsf O}\le130$ in modes \(\mathsf G\) and \(\mathsf{SC}\). Pumping input satisfies $0\le H_t^{\mathsf I}\le130$
in mode \(\mathsf P\), while $52\le H_t^{\mathsf I}\le130$ in mode \(\mathsf{SC}\). 
The strictly positive HSC pumping minimum prevents HSC from reducing to a duplicate generating mode. 
The turbine ramp-up and ramp-down limit is $\overline V=50$.
No intertemporal ramping constraint is imposed on pumping input.

The startup, shutdown, and online mode-change costs are $K^{\mathrm{SU}}=1200$, $K^{\mathrm{SD}}=300$, and $K^{\mathrm{MC}}=0$, respectively. 
The transition cost is
\[
\Gamma(x,x')
=
\begin{cases}
K^{\mathrm{SU}},
& x=\mathsf O,\quad x'\ne\mathsf O,\\
K^{\mathrm{SD}},
& x\ne\mathsf O,\quad x'=\mathsf O,\\
K^{\mathrm{MC}},
& x,x'\ne\mathsf O,\quad x\ne x',\\
0,
& x=x'.
\end{cases}
\]
The terminal transition into \(x_{T+1}=\mathsf O\) is included. In
particular, a transition between consecutive HSC blocks has zero cost.

The mode-specific fixed operating costs are $\kappa^{\mathsf G}=120$, $\kappa^{\mathsf P}=80$, $\kappa^{\mathsf{SC}}=220$, and $\kappa^{\mathsf O}=0$,
and the variable generation and pumping costs are
$c^{\mathsf G}=2\ \$/\mathrm{MWh}$ and $c^{\mathsf P}=0.5\ \$/\mathrm{MWh}$.
Each method minimizes the net operating cost
\[
\sum_{t=1}^{T}
\left[
\bigl(c^{\mathsf G}-\lambda_t\bigr)H_t^{\mathsf O}
+
\bigl(c^{\mathsf P}+\lambda_t\bigr)H_t^{\mathsf I}
+
\kappa^{x_t}
+
\Gamma(x_t,x_{t+1})
\right].
\]
The corresponding operating profit is the negative of this objective value. 
Table~\ref{tab:price-vector} reports the 24-stage electricity-price profile used in the code.
\begin{table}[!t]
\centering
\scriptsize
\setlength{\tabcolsep}{2.5pt}
\renewcommand{\arraystretch}{0.95}
\caption{Standalone electricity prices \(\lambda_t\), in \$/MWh.}
\label{tab:price-vector}
\begin{tabular}{c*{6}{c}|c*{6}{c}}
\toprule
\(t\) & 1 & 2 & 3 & 4 & 5 & 6
& \(t\) & 13 & 14 & 15 & 16 & 17 & 18\\
\(\lambda_t\)
& 130 & 150 & 160 & 135 & 150 & 150
& \(\lambda_t\)
& 120 & 125 & 130 & 160 & 260 & 280\\
\midrule
\(t\) & 7 & 8 & 9 & 10 & 11 & 12
& \(t\) & 19 & 20 & 21 & 22 & 23 & 24\\
\(\lambda_t\)
& 220 & 240 & 250 & 220 & 200 & 130
& \(\lambda_t\)
& 250 & 220 & 200 & 160 & 130 & 130\\
\bottomrule
\end{tabular}
\end{table}
For the grid-restricted comparisons in Table~\ref{tab:exactness}, the boundary grids are
$\mathcal M = \{0,18,36,\ldots,900\}$ and $\mathcal H =
\{0,40,66,92,118,130\}$.
Only event-boundary reservoir and turbine-output values are
discretized; the within-event dispatch and reservoir trajectories
remain continuous. The continuous-state event-based B\&B method and
the time-indexed MILP do not impose these grids. For the
grid-restricted B\&B row, the root event-network relaxation is
integral, so optimality is certified at the root without branching.

For the grid-resolution study in
Table~\ref{tab:grid-resolution}, the ramping grid \(\mathcal H\) is
held fixed. The reservoir-grid step, expressed as a percentage of
\(\overline M\), is selected from
\[
\{25,\ 12.5,\ 6.25,\ 3.125,\ 1.5625,\ 0.78125,\ 0.390625\}\%.
\]
These resolutions produce $5,\ 9,\ 17,\ 33,\ 65,\ 129,\ \text{and }257$ reservoir boundary points, respectively. 
The initial and terminal reservoir levels are inserted whenever they are not generated directly
by the specified step.

All standalone implementations use the same physical and economic
parameters. The continuous within-event LPs and time-indexed MILPs are
solved using Gurobi~12.0. Unless a command-line time or node limit is
specified, no such limit is imposed.

\subsection{System-Level Column-Generation Experiments}
\label{sec:end-to-end-setting}

The system-level experiments use a modified IEEE RTS-24 instance with 38 transmission branches, 32 thermal generators, and a 100-MVA base. 
Bus 13 is the reference bus. 
The bus peak loads, branch data, generator capacities, and quadratic variable-cost coefficients are taken from the RTS-24 data (see, e.g., \cite{Zimmerman2011MATPOWER}). 
Thermal outputs are continuous with zero minimum output; thermal commitment, fixed costs, and constant cost terms are omitted. 
Each quadratic variable cost is approximated by four equal-width convex piecewise-linear segments. 
The ramp limit of thermal generator \(g\) is
\[
V_g
=
\min\left\{
\overline P_g,\,
\max\{20,\,0.5\overline P_g\}
\right\}
\quad\mathrm{MW/h}.
\]
Load shedding is permitted only as a feasibility safeguard and is
penalized at \(10{,}000\ \$/\mathrm{MWh}\).

Let $h(t)=1+\bigl((t-1)\bmod24\bigr)$ and $d(t)=1+\left(\left\lfloor\frac{t-1}{24}\right\rfloor\bmod7\right)$.
The hourly load-shape vector and seven-day load factors are
\begin{align*}
q^{D}
={}&(0.67,0.63,0.60,0.59,0.59,0.60,0.74,0.86,\\
&\quad 0.95,0.96,0.96,0.95,0.95,0.95,0.93,0.94,\\
&\quad 0.99,1.00,1.00,0.96,0.91,0.83,0.73,0.63),\\
\eta^{D}
={}&(1.00,0.96,1.02,1.00,0.98,0.93,0.90).
\end{align*}
Thus, the demand at bus \(n\) and stage \(t\) is $D_{n,t} = D_n^{\mathrm{peak}}q^D_{h(t)}\eta^D_{d(t)}$.

Three wind farms are located at buses 3, 15, and 21, with nominal
capacities of 250, 300, and 350~MW and profile shifts of 0, 7, and
13 hours, respectively. The base wind-capacity-factor profile and
seven-day factors are
\begin{align*}
q^{W}
={}&(0.70,0.74,0.76,0.78,0.80,0.82,0.75,0.62,\\
&\quad 0.50,0.42,0.38,0.35,0.32,0.30,0.33,0.40,\\
&\quad 0.48,0.58,0.65,0.72,0.78,0.80,0.77,0.73),\\
\eta^{W}
={}&(1.00,0.85,1.10,0.95,1.05,0.90,1.15).
\end{align*}
For wind farm \(w\), define
\[
h_w(t)=1+\bigl((t-1+s_w)\bmod24\bigr),
\]
where \(s_w\) is its profile shift. Its available output is
\[
A_{w,t}
=
1.25\,C_w
\min\left\{0.95,\,
q^W_{h_w(t)}\eta^W_{d(t)}
\right\}.
\]
The seven-day load and wind factors repeat when the horizon exceeds one
week.

The base experiment has \(T=336\) stages and three identical PSH units located at buses $\mathcal B_{336}^{\mathrm{PSH}} = \{7,15,23\}$.
The enlarged experiment has \(T=672\) stages and nine identical PSH units located at $\mathcal B_{672}^{\mathrm{PSH}} = \{2,3,7,10,13,15,18,21,23\}$.

Both system experiments use generating, pumping, HSC, and offline modes. The common PSH parameters are
\begin{align*}
[\underline B^{\mathsf O},\bar B^{\mathsf O}]
&=[40,130]\ \mathrm{MW},
&
[\underline B^{\mathsf I},\bar B^{\mathsf I}]
&=[0,130]\ \mathrm{MW},\\
\overline V&=50\ \mathrm{MW/h},
&
\overline M&=900\ \mathrm{MWh},\\
\alpha&=0.75,
&
\mu&=1,\\
L&=2,
&
l&=2,\\
K^{\mathrm{SU}}&=\$800,
&
K^{\mathrm{SD}}&=\$0.
\end{align*}
Natural inflow and spillage are zero. 
Each unit begins the first stage offline and satisfies $M_1=M_{T+1}=450$. 
No terminal operating-mode or terminal turbine-output condition is imposed in the system experiments. 
The physical generation and pumping costs are linear and equal to \(5\ \$/\mathrm{MWh}\) and \(2\ \$/\mathrm{MWh}\), respectively. 
Transitions between generating and pumping modes have zero cost.

For the finite-grid pricing oracle, the boundary grids are
\begin{align*}
\mathcal M&=\{0,150,300,450,600,750,900\},\\
\mathcal H&=\{0,40,90,130\}.
\end{align*}
Event blocks contain at most four stages, but consecutive blocks in the same mode share boundary states at zero transition cost, permitting longer operating runs. 
The finite-grid oracle reuses its network and within-event feasible regions, updating only dual-dependent objective coefficients during CG. 
Pumping dispatch is computed analytically, generation-event costs use cached extreme points, and pricing is solved by shortest path.

The restricted master combines complete PSH trajectories, enforces network constraints, and requires each unit's column weights to sum to one. 
Both implementations share the master formulation, system data, all-offline initialization, and stopping rule, differing only in continuous-state time-indexed MILP or finite-grid LP pricing. 
CG terminates when no column is added and the minimum reduced cost across units is at least \(-10^{-6}\).

For \(T=672\), finite-grid CG satisfies the stopping criterion, whereas MILP CG is terminated after two hours. Its reported values correspond to the last completed iteration and do not represent convergence.

Pricing time accumulates wall-clock time across all pricing iterations and PSH units. Total time additionally includes model and oracle construction and all restricted-master solves.

\section{Proofs}
\label{app:finite_dimensional_convexification}
This appendix proves \textbf{Propositions~1}--\textbf{5}.

\begin{proof}[Proof of Proposition 1]
We establish the two correspondences stated in the proposition.

\medskip
\noindent
\textbf{From the time-indexed MILP to the event formulation.}
Consider a feasible time-indexed solution. Its mode at stage \(t\) is
\[
x_t=
\begin{cases}
\mathsf G, & y_t^{\mathsf G}=1,\\
\mathsf P, & y_t^{\mathsf P}=1,\\
\mathsf{SC}, & y_t^{\mathsf{SC}}=1,\\
\mathsf O, & y_t=0.
\end{cases}
\]
Constraint~\eqref{eq:status} and the binary restrictions ensure that
exactly one case applies. Set \(x_{T+1}=x_T\), and define
\[
\bar H_1=0,
\qquad
\bar H_{t+1}=
\begin{cases}
H_t^{\mathsf O}, & x_t\in\{\mathsf G,\mathsf{SC}\},\\
0, & x_t\in\{\mathsf P,\mathsf O\}.
\end{cases}
\]
Starting from \(\tau_1=0\), update \(\tau_{t+1}\) by
\eqref{eq:tau}. For \(t=1,\ldots,T\), consider the one-stage event
\[
\begin{aligned}
s_t&=(x_t,M_t,\bar H_t,\tau_t),\\
e_t&=(t+1,x_{t+1},M_{t+1},\bar H_{t+1}).
\end{aligned}
\]
The minimum up/down-time constraints imply that an online/offline
transition occurs only after the compulsory residence time represented
by \(\tau_t\) has elapsed. The reservoir and output bounds place
\(M_{t+1}\) and \(\bar H_{t+1}\) in their state domains, while the
ramping constraint supplies the required ramp-down condition whenever
the successor mode has no turbine output. Hence
\(e_t\in\mathcal E(s_t)\).

Restrict the time-indexed continuous variables to stage \(t\). If
\(x_t=\mathsf G\), constraints
\eqref{eq:gen-bounds}--\eqref{eq:ramp-up} and the reservoir balance
reduce to the one-stage generating block. If \(x_t=\mathsf P\), they
reduce to the pumping block. If \(x_t=\mathsf{SC}\), both
\(H_t^{\mathsf O}\) and \(H_t^{\mathsf I}\) are active and the
constraints reduce to the HSC block. Finally, if \(x_t=\mathsf O\),
both power variables are zero and the offline reservoir equation
applies. The definitions of \(\bar H_t\) and \(\bar H_{t+1}\) also
enforce the incoming and outgoing ramping boundaries, including the
ramp-down condition when the next mode has no turbine output.

For a one-stage event, its boundary data fix the inherited stagewise
dispatch: \(\bar H_{t+1}\) fixes \(H_t^{\mathsf O}\) whenever turbine
output is active, and the reservoir balance fixes
\(H_t^{\mathsf I}\) whenever pumping input is active. The epigraph
constraints in the corresponding block therefore give
\[
c_{t,t+1}(s_t,e_t)
=
h_t\!\left(
H_t^{\mathsf O},H_t^{\mathsf I},
y_t^{\mathsf G},y_t^{\mathsf P},y_t^{\mathsf{SC}}\right).
\]
Moreover, for \(t=1,\ldots,T-1\),
\[
\Gamma_{t,t+1}(x_t,x_{t+1})
=SU_{t+1}u_{t+1}+SD_{t+1}d_{t+1}.
\]
Because \(y_0=y_1=0\), we have \(u_1=d_1=0\), and the choice
\(x_{T+1}=x_T\) creates no terminal transition cost. Summing the
one-stage event costs therefore reproduces the MILP objective exactly.
Thus every feasible MILP solution induces a feasible event schedule
with the same cost, and
\[
V_1(s_1)\leq \min_{q\in\mathcal Y}C(q).
\]

\medskip
\noindent
\textbf{From the event formulation to the time-indexed MILP.}
Conversely, consider a feasible event schedule
\[
\begin{aligned}
s_{t_1}\xrightarrow{e_{t_1}}s_{t_2}
&\xrightarrow{e_{t_2}}\cdots
\xrightarrow{e_{t_K}}s_{t_{K+1}},
\end{aligned}
\]
with $t_1=1$ and $t_{K+1}=T+1$, and choose an optimal solution of every within-event LP. Concatenate these solutions. For \(i=t_k,\ldots,t_{k+1}-1\), set
\[
(y_i^{\mathsf G},y_i^{\mathsf P},y_i^{\mathsf{SC}})
=
\begin{cases}
(1,0,0), & x_{t_k}=\mathsf G,\\
(0,1,0), & x_{t_k}=\mathsf P,\\
(0,0,1), & x_{t_k}=\mathsf{SC},\\
(0,0,0), & x_{t_k}=\mathsf O,
\end{cases}
\]
and define \(y_i\), \(g_i\), and \(p_i\) as in Section \ref{sec:MILP}. Define \(u_i=1\) at an offline-to-online transition,
\(d_i=1\) at an online-to-offline transition, and set them to zero
otherwise.

Within each event, the applicable mode-specific block gives the
time-indexed capacity, reservoir, ramping, and cost constraints.
At a common event boundary, the outgoing reservoir and ramping values
of one event are the incoming values of the next. Thus the
concatenated reservoir trajectory is continuous. The ramping
constraint also holds across the boundary: two consecutive
turbine-active events share the same ramping boundary; entry from
\(\mathsf P\) or \(\mathsf O\) uses a zero boundary; and exit to
\(\mathsf P\) or \(\mathsf O\) is feasible only when the preceding
terminal turbine output is at most \(\overline V\).

Finally, \eqref{eq:tau} and the event-feasibility condition enforce the minimum up/down times, while \(\Gamma_{t_k,t_{k+1}}(x_{t_k},x_{t_{k+1}})\) records exactly the start-up or shut-down cost at the boundary. 
Hence the concatenated variables form a feasible time-indexed solution with the same objective value as the event schedule. 
It follows that $\min_{q\in\mathcal Y}C(q)\leq V_1(s_1)$. Combining the two inequalities proves the proposition.
\end{proof}

\begin{proof}[Proof of Proposition 2]
We first give a compact form of the finite-dimensional pathwise LP.
It is algebraically equivalent to concatenating the mode-specific
event blocks along each complete discrete event path, but it avoids
introducing duplicate copies of the reservoir and ramping boundaries.

\medskip
\noindent
\textbf{Convex-hull equality.}
Consider a feasible point of \(\mathcal P^{\mathsf{FD}}\). For every
path with \(\rho_{\tilde\omega}>0\), divide all its scaled physical
variables by \(\rho_{\tilde\omega}\). Dividing
\eqref{eq:FD-path-physical} by this weight gives the original
time-indexed physical constraints with the mode and transition
variables fixed by \(\tilde\omega\). Equivalently, it concatenates the
corresponding generating, pumping, HSC, and offline event blocks.
Hence the normalized variables define a schedule
\(q^{\tilde\omega}\in\mathcal Y\). If
\(\rho_{\tilde\omega}=0\), the scaled physical bounds force all
physical variables of that path to zero. The reconstruction equations
therefore give $\bar q
=
\sum_{\substack{\tilde\omega\in\widetilde\Omega}}
\rho_{\tilde\omega}q^{\tilde\omega}$ and $ \sum_{\substack{\tilde\omega\in\widetilde\Omega}}
\rho_{\tilde\omega}=1$ for $\rho_{\tilde\omega}>0$.
Thus $\operatorname{proj}_{\bar q} \left(\mathcal P^{\mathsf{FD}}\right) \subseteq \operatorname{conv}(\mathcal Y)$.

Conversely, let
\(\bar q=\sum_{r=1}^{R}\beta_rq^r\), where
\(q^r\in\mathcal Y\), \(\beta_r\geq0\), and
\(\sum_{r=1}^{R}\beta_r=1\). By Proposition~\ref{prop:1}, each \(q^r\) induces a
complete path \(\tilde\omega(r)\in\widetilde\Omega\). Choose cost
epigraph variables attaining \(C(q^r)\), and set
\[
\rho_{\tilde\omega}
:=
\sum_{\{r:\tilde\omega(r)=\tilde\omega\}}\beta_r.
\]
For each path, define every scaled physical and cost variable as the
corresponding \(\beta_r\)-weighted sum over schedules assigned to that
path. All schedules assigned to the same path satisfy the same linear
system. Their weighted sum therefore satisfies
\eqref{eq:FD-path-physical}--\eqref{eq:FD-path-total-cost} and
reconstructs \(\bar q\). Hence
\[
\operatorname{conv}(\mathcal Y)
\subseteq
\operatorname{proj}_{\bar q}
\left(\mathcal P^{\mathsf{FD}}\right).
\]
The two inclusions prove the stated results.

\medskip
\noindent
\textbf{Optimal values and path concentration.}
For every positive-weight path, normalize its cost variables as well and let \(\widehat C^{\tilde\omega}\) denote the resulting path cost.
Because the variables form an epigraph, $\widehat C^{\tilde\omega} \geq C(q^{\tilde\omega})$.
For a zero-weight path, \eqref{eq:FD-path-physical}--\eqref{eq:FD-path-cost-epigraph} imply
\(\widetilde C^{\tilde\omega}\geq0\). 
Therefore every feasible point satisfies
\begin{align*}
\bar C
&\geq
\sum_{\substack{\tilde\omega\in\widetilde\Omega\\
\rho_{\tilde\omega}>0}}
\rho_{\tilde\omega}C(q^{\tilde\omega})\\
&\geq
\min_{q\in\mathcal Y}C(q).
\end{align*}
Conversely, any \(q\in\mathcal Y\) can be represented by assigning
weight one to an induced path, weight zero to all other paths, and
choosing the cost epigraph variables at their minimum feasible values.
Consequently,
\[
\min_{(\bar q,\cdot)\in\mathcal P^{\mathsf{FD}}}\bar C
=
\min_{q\in\mathcal Y}C(q)
=
V_1(s_1),
\]
where the last equality follows from Proposition~\ref{prop:1}. In particular,
if \(q^\star\) is an optimal deterministic schedule with path
\(\tilde\omega^\star\), assigning
\(\rho_{\tilde\omega^\star}=1\) and zero weight to every other path
gives an optimal solution of the finite-dimensional LP. This proves
the proposition.
\end{proof}

\begin{proof}[Proof of Proposition 3]
As in the unit-flow description preceding \eqref{eq:ALLINONE}, the
initial layer has the single source \(\hat s_1\); equivalently, every
other stage-1 outflow is fixed to zero. This is the standard
single-source convention for the displayed flow model.

Abbreviate an arc \((\hat s_t,\hat a_t)\) by \(a\), its flow by
\(\pi_a\), and its scaled within-event variables by \(\tilde z_a\).
Let \(\mathcal P_a\) be the unscaled physical feasible set of the
generating, pumping, HSC, or offline block attached to \(a\). The
boundary values in \(\mathcal P_a\) are fixed by the labels of the
initial and terminal grid nodes. The convex-hull claim concerns the
flow and physical schedule variables, so the cost-epigraph variables
are projected out; their upward slack affects neither the physical
schedule nor the optimal value.

The arc-embedded constraints in \eqref{eq:ALLINONE} are precisely the
perspective-scaled constraints $\tilde z_a\in \pi_a\mathcal P_a$.
In particular, if \(\pi_a>0\), then
\(z_a:=\tilde z_a/\pi_a\in\mathcal P_a\); if \(\pi_a=0\), the scaled
physical bounds force \(\tilde z_a=0\).

Consider any feasible solution of \eqref{eq:ALLINONE}. Because every
arc advances the stage index, its unit flow admits a source-to-sink
path decomposition
\[
\pi=\sum_{r=1}^{R}\rho_r\chi^{(r)},
\qquad
\rho_r\geq0,
\qquad
\sum_{r=1}^{R}\rho_r=1,
\]
where \(\chi^{(r)}\) is the incidence vector of a path in the finite-grid event network. For each path \(r\), assign \(z_a\) to every
arc \(a\) on that path. These arcwise decisions form a feasible
integral event-network solution: every local block is feasible, and
adjacent arcs have common boundary values because they meet at the
same grid node. Moreover, for every active arc, $\sum_{\{r:a\in\chi^{(r)}\}}\rho_r z_a =\pi_a z_a =\tilde z_a$.
Thus every point in the projection of \eqref{eq:ALLINONE} is a convex
combination of feasible integral event-network points.

For the reverse inclusion, take any convex combination of feasible
integral event-network points. Set each \(\pi_a\) equal to the total
weight of paths using \(a\), and set \(\tilde z_a\) equal to the
corresponding weighted sum of their local variables. Flow conservation
holds by linearity. If \(\pi_a>0\), then
\(\tilde z_a/\pi_a\) is a convex combination of points in the
polyhedron \(\mathcal P_a\), and hence belongs to \(\mathcal P_a\).
Therefore every scaled block is feasible. This proves that
\eqref{eq:ALLINONE} is an extended convex-hull formulation of the
finite-grid event-network model.

It remains to verify the claim about an integral optimum. 
Normalize the full local variables, including their cost epigraph variables, on every arc with \(\pi_a>0\), and let \(\ell_a\) be the resulting local operating cost plus the transition cost \(\Gamma(\hat s_t,\hat a_t)\).
Redundant cost slack on inactive arcs can be set to zero. 
The objective of the feasible LP solution is then $\sum_a\pi_a\ell_a = \sum_{r=1}^{R}\rho_r
\sum_{a\in\chi^{(r)}}\ell_a$.
At least one path has cost no greater than this weighted average.
Selecting that path integrally and using its normalized local solutions gives a feasible integral solution with no larger objective value; reoptimizing its local blocks can only improve it. Applying this construction to an optimal LP solution proves that an integral optimal solution exists. 
Such a solution is exactly an optimal path of the finite-grid DP.
\end{proof}

\begin{proof}[Proof of Proposition 4]
Write \(\xi_t=(M_t,\bar H_t)\) and let $\delta:=\Delta_M+\Delta_H$.
The feasible-grid-perturbation condition is understood in the event-network sense: when two consecutive events share a boundary, the same perturbed grid value is used as the outgoing boundary of the first event and the incoming boundary of the second. 
Otherwise, the perturbed events would not form a feasible grid path.

For fixed discrete event data \((t,j,x_t,x^\dagger,\tau_t)\), the within-event problem is a parametric LP whose parameters are the incoming and outgoing boundary values. 
Its physical variables are bounded, and those boundary values enter linearly. 
Consequently, its optimal value is a polyhedral, piecewise-affine function on its feasible boundary domain and is Lipschitz there. 
Since the horizon and the mode set are finite, one constant \(L\), independent of the grid
resolution, can be chosen for all feasible event labels and all four mode-specific blocks. 
Thus, whenever two events have the same discrete data and feasible boundary pairs
\((\xi_t,\xi_j)\) and \((\xi_t',\xi_j')\),
\[
\begin{aligned}
\left|
c_{t,j}(\xi_t,\xi_j)
-c_{t,j}(\xi_t',\xi_j')
\right| \leq
L\left(
\|\xi_t-\xi_t'\|_1+\|\xi_j-\xi_j'\|_1
\right).
\end{aligned}
\]

Let \(z^\star=V_1(s_1)\) be the continuous-state optimal value, and fix an optimal event path with boundaries $1=t_1<t_2<\cdots<t_{K+1}=T+1$.
By the feasible-grid-perturbation condition in the proposition, its shared boundary at each \(t_k\) can be replaced by a common grid boundary \(\hat\xi_{t_k}\) so that every perturbed event remains feasible, its switching time and successor mode are unchanged, and $\|\hat\xi_{t_k}-\xi_{t_k}\|_1 \leq C_B\delta$.
The same perturbed value is used as the outgoing boundary of event \(k-1\) and the incoming boundary of event \(k\). 
Hence the perturbed events form a feasible path of the finite-grid DP.

The transition costs are unchanged because the discrete event data
are unchanged. The Lipschitz bound gives, for every event on the path,
\[
\left|
c_{t_k,t_{k+1}}(\hat\xi_{t_k},\hat\xi_{t_{k+1}})
-c_{t_k,t_{k+1}}(\xi_{t_k},\xi_{t_{k+1}})
\right|
\leq 2LC_B\delta.
\]
Since \(K\leq T\), the cost of this grid-feasible path is at most $z^\star+2LTC_B\delta$.
Let \(\hat z^\star\) be the optimal value of the finite-grid DP.
Every grid-feasible path is also feasible for the continuous-state DP, so \(z^\star\leq\hat z^\star\). 
The preceding grid path gives the reverse estimate $\hat z^\star\leq z^\star+2LTC_B\delta$.
Therefore $\left|\hat z^\star-z^\star\right| \leq C(\Delta_M+\Delta_H)$, where $C:=2LTC_B$ is independent of the grid resolution.
\end{proof}

\begin{proof}[Proof of Proposition 5]
For a complete discrete event path
\(\tilde\omega\in\widetilde\Omega\), let \(z(\tilde\omega)\) be the
optimal value of its continuous path LP, with
\(z(\tilde\omega)=+\infty\) if that LP is infeasible. By the
event-based reformulation, we have $V_1(s_1) = \min_{\tilde\omega\in\widetilde\Omega} z(\tilde\omega)$.

At a B\&B node \(n\), let \(\widetilde\Omega(n)\) be the set of
complete paths extending its fixed prefix \(P_n\). Every feasible
solution of the path LP for a path in \(\widetilde\Omega(n)\) induces
an integral unit flow satisfying the fixed-prefix constraints. For
that integral flow, the lifted boundary variables equal the intended
products, so they satisfy the McCormick inequalities
\eqref{eq:ebb-mccormick}. The pathwise boundary equalities imply
\eqref{eq:ebb-lifted-conservation}, and the perspective-scaled
generating, pumping, HSC, and offline blocks hold with the same
objective value. Hence every exact completion of \(n\) is feasible in
the node relaxation, and
\[
\mathrm{LB}(n)
\leq
\min_{\tilde\omega\in\widetilde\Omega(n)}
z(\tilde\omega).
\]
Thus pruning a node with \(\mathrm{LB}(n)\geq\mathrm{UB}\) cannot
discard a completion that improves the incumbent. Conversely, every
incumbent is obtained by solving a complete path LP exactly and is
therefore the objective value of a feasible event schedule.

The search tree is finite. Indeed, at stage \(t\), both
\(j\in\{t+1,\ldots,T+1\}\) and \(x^\dagger\in\mathcal X\) have
finitely many choices, and the stage index strictly increases along
each branch. Hence the depth is at most \(T\), and exhaustive
branching over all unpruned nodes terminates after finitely many
steps.

Let \(\tilde\omega^\star\) be an optimal complete path. If its leaf is
evaluated, the incumbent equals the global optimum no later than that
evaluation. Otherwise, some ancestor \(n\) of this leaf is pruned.
At the time of pruning,
\[
\begin{aligned}
V_1(s_1)
&\leq \mathrm{UB}
\leq \mathrm{LB}(n)\\
&\leq
\min_{\tilde\omega\in\widetilde\Omega(n)}
z(\tilde\omega)\\
&\leq z(\tilde\omega^\star)
=V_1(s_1).
\end{aligned}
\]
All inequalities must therefore be equalities, so the incumbent is
already optimal when that ancestor is pruned. In either case, the
final incumbent equals \(V_1(s_1)\). This proves finite termination
and exactness.
\end{proof}

\end{document}